\newtheorem{theorem}{Theorem}[section]
\newaliascnt{lemma}{theorem}
\newtheorem{lemma}[lemma]{Lemma}
\newaliascnt{proposition}{theorem}
\newtheorem{proposition}[proposition]{Proposition}
\newaliascnt{definition}{theorem}
\newtheorem{definition}[definition]{Definition}
\newaliascnt{corollary}{theorem}
\newaliascnt{conjecture}{theorem}
\newaliascnt{claim}{theorem}
\newtheorem{claim}[claim]{Claim}
\newaliascnt{observation}{theorem}
\newtheorem{observation}[observation]{Observation}
\newaliascnt{remark}{theorem}
\renewenvironment{abstract}
{\small
\vspace{-1em}
\begin{center}
\bfseries \abstractname\vspace{-.5em}\vspace{0pt}
\end{center}
\list{}{
  \setlength{\leftmargin}{0.5in}%
  \setlength{\rightmargin}{\leftmargin}%
}%
\item\relax}
{\endlist}
\newcommand\cqedsymbol{\ifmmode$\lrcorner$\else{\unskip\nobreak\hfil
\penalty50\hskip1em\null\nobreak\hfil$\lrcorner$
\parfillskip=0pt\finalhyphendemerits=0\endgraf}\fi} 
\newcommand{\cqed}{\renewcommand{\qed}{\cqedsymbol}}
\newcommand{\problemtitle}[1]{\gdef\@problemtitle{#1}}
\newcommand{\probleminput}[1]{\gdef\@probleminput{#1}}
\newcommand{\problemquestion}[1]{\gdef\@problemquestion{#1}}
  \par\addvspace{.5\baselineskip}
  \par\addvspace{.5\baselineskip}
  \par\addvspace{.5\baselineskip}
  \par\addvspace{.5\baselineskip}
\newcolumntype{P}[1]{>{\centering\arraybackslash}p{#1}}
\newcolumntype{C}[1]{>{\centering\arraybackslash}m{#1}}
\newcolumntype{N}{@{}m{0pt}@{}}
\newcommand\eqdef{\overset{\text{def}}{=}}
\def\B{\mathcal{B}} 
\def\D{\mathcal{D}} 
\def\E{\mathcal{E}} 
\def\e{e} 
\def\G{\mathcal{G}} 
\def\H{\mathcal{H}} 
\def\I{\mathcal{I}} 
\def\L{\mathcal{L}} 
\def\N{\mathcal{N}} 
\def\S{\mathcal{S}} 
\def\ITr{ITr}
\def\ID{\mathcal{ID}}
\def\DomEnum{\textsc{Dom-Enum}}
\def\TransEnum{\textsc{Trans-Enum}}
\def\IDomEnum{\textsc{IDom-Enum}}
\def\ITransEnum{\textsc{ITrans-Enum}}
\def\DualEnum{\textsc{Dual-Enum}}
\def\Dual{\textsc{Dual}}
\DeclareMathOperator{\Min}{Min}
\DeclareMathOperator{\Max}{Max}
\DeclareMathOperator{\priv}{Priv}
\DeclareMathOperator{\poly}{poly}
\begin{document}

\title{On the dualization in distributive \\lattices and related problems\thanks{The first two authors have been supported by the ANR project GraphEn ANR-15-CE40-0009, France. The last author has been supported by JST CREST Grant Number JPMJCR1401, Japan.}}

\date{May 20, 2020}

\author{
Oscar Defrain\thanks{LIMOS, Université Clermont Auvergne, France.}~~
\addtocounter{footnote}{-1} 
\and
Lhouari Nourine\footnotemark
\and
~~~~Takeaki Uno\thanks{National Institute of Informatics, Japan.}
}

\maketitle

\begin{abstract}
In this paper, we study the dualization in distributive lattices, a generalization of the well-known hypergraph dualization problem.
We in particular propose equivalent formulations of the problem in terms of graphs, hypergraphs, and posets.
It is known that hypergraph dualization amounts to generate all minimal transversals of a hypergraph, or all minimal dominating sets of a graph.
In this new framework, a poset on vertices is given together with the input (hyper)graph, and minimal ``ideal solutions'' are to be generated.
This in particular allows us to study the complexity of the problem under various combined restrictions on graph classes and poset types, including bipartite, split, and co-bipartite graphs, and variants of neighborhood inclusion posets.
We for example show that while the enumeration of minimal dominating sets is possible with linear delay in split graphs, the problem, within the same class, gets as hard as for general graphs when generalized to this framework.
More surprisingly, this result holds even when the poset is only comparing vertices of included neighborhoods in the graph. 
If both the poset and the graph class are sufficiently restricted, we show that the dualization is tractable relying on existing algorithms from the literature.

\vskip5pt\noindent{}{\bf Keywords:} distributive lattice dualization, ideal enumeration, neighborhood inclusion posets, dominating sets, hypergraph transversals.
\end{abstract}

\def\figurescale{1.2}

\section{Introduction}

The dualization in Boolean lattices is a central problem in algorithmic enumeration as it is equivalent to the enumeration of the minimal transversals of a hypergraph, the minimal dominating sets of a graph, and to many other generation problems \cite{eiter1995identifying,kante2014enumeration}.
It~is also a problem of practical interest in database theory, logic, artificial intelligence and pattern mining \cite{kavvadias1993horn,eiter1995identifying,gunopulos1997data,eiter2003new,elbassioni2002algorithm,nourine2012extending}.
To~date, it is still open whether this problem can be solved in output-polynomial time. 
We say that an enumeration algorithm is running in {\em output-polynomial time} if its running time is bounded by a polynomial depending on the sizes of both the input and output data.
If the running times between two consecutive outputs is bounded by a polynomial depending on the size of the input, then we say that the algorithm is running with {\em polynomial delay}.
We refer the reader to \cite{johnson1988generating,creignou2019complexity,strozecki2019survey} for more details on the complexity of enumeration algorithms.
As of now, the best known algorithm for the dualization in Boolean lattices is due to Fredman and Khachiyan and runs in output quasi-polynomial time~\cite{fredman1996complexity}.
When generalized to arbitrary lattices, the problem is of practical interest in lattice-oriented machine learning through hypothesis generation~\cite{kuznetsov2004complexity,babin2017dualization}, and in pattern mining~\cite{nourine2012extending}.
It~was recently proved in \cite{babin2017dualization} that the dualization in this context is impossible in output-polynomial time, unless {\sf P=NP}.
In~\cite{defrain2019dualization}, it was shown that this result holds even when the premises in the implicational base (coding the lattice) are of size at most two.
In the case of premises of size one---when the lattice is distributive---the problem is still open.
The best known algorithm is due to Babin and Kuznetsov and runs in output sub-exponential time \cite{babin2017dualization}.
Output quasi-polynomial time algorithms are known for several subclasses, including distributive lattices coded by products of chains \cite{elbassioni2009algorithms}, or those coded by the ideals of an interval order \cite{defrain2019dualization}.

In this paper, we propose equivalent formulations for the dualization in distributive lattices, in terms of graphs, hypergraphs, and posets.
In the new framework, a poset on vertices is given together with the input (hyper)graph.
Then, the task is of enumerating minimal ideals of the poset with the desired property, i.e., transversality or domination.
We show that the obtained problems are equivalent to the dualization in distributive lattices, even when considering various combined restrictions on graph classes and poset types, including bipartite, split, and co-bipartite graphs, and variants of neighborhood inclusion posets; see Theorems~\ref{thm:maintrans} and~\ref{thm:maindom}.
Moreover, we believe that these equivalent problems may be simpler to attack using graph structure.
For combined restrictions on graph classes and poset types that are not considered in Theorems~\ref{thm:maintrans} and~\ref{thm:maindom}, we show that the problem gets tractable relying on existing algorithms from the literature; see Theorems~\ref{thm:split} and~\ref{thm:bipartite}.
A summary of these results is given in Figure~\ref{fig:sum}.

The rest of the paper is organized as follows.
In Section~\ref{sec:preliminaries} we introduce necessary concepts and definitions.
In Sections~\ref{sec:transideal} and~\ref{sec:domideal}, we generalize the two problems of enumerating minimal transversals and minimal dominating sets to the dualization in distributive lattices.
In Section~\ref{sec:tractable}, we exhibit tractable cases of the problem.
We discuss future work in Section~\ref{sec:conclusion}.

\section{Preliminaries}\label{sec:preliminaries}

We refer to~\cite{diestel2005graph} for graph terminology not defined below; all graphs considered in this paper are undirected, finite and simple. 
A {\em graph} $G$ is a pair $(V(G),E(G))$ where $V(G)$ is the set of {\em vertices} and $E(G)\subseteq \{\{x,y\} \mid x,y\in V(G), x\neq y\}$ is the set of {\em edges}.
Edges are usually denoted by $xy$ (or $yx$) instead of $\{x,y\}$.
A {\em clique} in a graph $G$ is a set of vertices $K$ such that every two vertices in $K$ are adjacent.
An {\em independent set} in a graph $G$ is a set of vertices $S$ such that no two vertices in $S$ are adjacent.
The {\em subgraph} of $G$ induced by $X\subseteq V(G)$, denoted by $G[X]$, is the graph $(X,E(G)\cap \{\{x,y\} \mid x,y\in X,\ x\neq y\})$; $G-X$ is the graph $G[V(G)\setminus X]$.
If $xy$ is an edge, $G-xy$ denotes the graph $(V(G),E(G)\setminus \{x,y\})$.
We note $N(x)$ the set of {\em neighbors} of $x$ defined by $N(x)=\{y\in V(G)\mid xy\in E(G)\}$.
We note $N[x]$ the set of {\em closed neighbors} of $x$ defined by $N[x]= N(x)\cup\{x\}$.
If it is not clear from the context, we add the subscript $G$ to denote the neighborhood in $G$, as in $N_G[x]$. 
Two vertices $x,y$ are called {\em twin} if $N[x]=N[y]$, and {\em false twin} if $N(x)=N(y)$.
For a given set $X\subseteq V(G)$, we respectively denote by $N[X]$ and $N(X)$ the two sets defined by $N[X]=\bigcup_{x\in X} N[x]$ and $N(X)=N[X]\setminus X$.

Let $G$ be a graph.
We say that $G$ is {\em bipartite} (resp.~{\em co-bipartite}) if $V(G)$ can be partitioned into two independent sets (resp.~two cliques).
If $V(G)$ can be partitioned into one independent set and one clique, then $G$ is called {\em split}.

Let $D,X\subseteq V(G)$ be two subsets of vertices of $G$.
We say that $D$ \emph{dominates} $X$ if $X\subseteq N[D]$.
It is (inclusion-wise) minimal if $X\not\subseteq N[D\setminus \{x\}]$ for any $x\in D$.
A (minimal) \emph{dominating set} of $G$ is a (minimal) dominating set of $V(G)$.
The set of all minimal dominating sets of $G$ is denoted by $\D(G)$, and the problem of enumerating $\D(G)$ given $G$ is denoted by \DomEnum{}.
The set of all minimal dominating sets of a given subset $X$ of vertices of $G$ is denoted by $\D_G(X)$.
Let $x$ be a vertex of $D$.
We say that $x$ has {\em private neighbor} $y$ w.r.t.~$D$ in $G$ if $y\in N[D]$ and $y\not\in N[D\setminus \{x\}]$. 
The set of private neighbors of $x\in D$ in $G$ is denoted by $\priv(D,x)$. 
It is easy to see that $D$ is a minimal dominating set of $G$ if and only if $D$ dominates $G$ and $\priv(D,x)\neq \emptyset$ for every $x\in D$.
Also, note that a set in $\D_G(X)$ may contain vertices in $G-X$ as long as these vertices have private neighbors in~$X$.

We refer to~\cite{berge1984hypergraphs} for hypergraph terminology not defined below.
A {\em hypergraph} $\H$ is a pair $(V(\H),\E(\H))$ where $V(\H)$ is the set of {\em vertices} (or {\em groundset}) and $\E(\H)$ is a set of non-empty subsets of $V(\H)$ called {\em edges} (or {\em hyperedges}).
In this paper, and as is custom, we will often describe a hypergraph by its set of edges only, and will denote $\e\in \H$ in place of $\e\in \E(\H)$.
If $x$ is a vertex of $\H$, we denote by $\E_x$ the set of edges incident to $x$ defined by $\E_x=\{\e\in \E(\H) \mid x\in \e\}$.
A {\em transversal} in a hypergraph $\H$ is a set of vertices $T$ that intersects every edge of $\H$.
It is minimal if it does not contain any transversal as a proper subset.
The set of all minimal transversals of $\H$ is denoted by $Tr(\H)$, and the problem of enumerating $Tr(\H)$ given $\H$ is denoted by \TransEnum{}.
A hypergraph $\H$ is called {\em Sperner} if $\e\not\subseteq \e'$ for any two distinct hyperedges $\e,\e'\in \H$.
It is well known that hypergraphs can be considered Sperner when dealing with \TransEnum{}.
In the following, we denote by $\N(G)$ the Sperner hypergraph of closed neighborhoods of $G$ defined by $\N(G)=\Min_\subseteq\{N[x]\mid x\in V(G)\}$.
It is not hard to see that \DomEnum{} is a particular case of \TransEnum{}, where the minimal dominating sets of $G$ are exactly the minimal transversals of $\N(G)$.
Recently in \cite{kante2014enumeration}, it was shown that the two problems are in fact equivalent, even when restricted to co-bipartite graphs.
The result in \cite{kante2014enumeration} is based on the following construction.
For~any hypergraph $\H$, the {\em bipartite incidence graph} of $\H$ is the graph $I(\H)$ with bipartition $X=V(\H)$ and $Y=\{y_\e \mid \e\in \E(\H)\}$, and where there is an edge between $x\in X$ and $y_\e\in Y$ if $x$ belongs to~$\e$ in $\H$.
The construction of a bipartite incidence graph is given in Figure~\ref{fig:inc-bipartite}.

\begin{figure}[t]
  \center
  \includegraphics[scale=\figurescale]{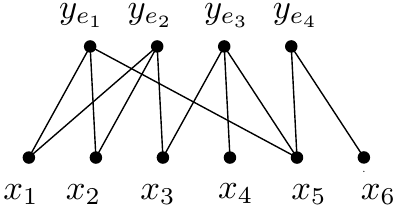}
  \caption{The bipartite incidence graph $I(\H)$ of bipartition $X=V(\H)$ and $Y=\{y_\e \mid \e\in \H\}$ for the hypergraph $\H=\{\e_1,\e_2,\e_3,\e_4\}$ where $\e_1=\{x_1,x_2,x_5\}$, $\e_2=\{x_1,x_2,x_3\}$, $\e_3=\{x_3,x_4,x_5\}$ and $\e_4=\{x_5,x_6\}$. Then $xy_e\in E(I(\H))$ if and only if $x\in e$.}\label{fig:inc-bipartite}
\end{figure}

A {\em partial order} on a set $X$ (or {\em poset}) is a binary relation $\leq$ on $X$ which is reflexive, anti-symmetric and transitive, denoted by $P=(X,\leq)$.
Two elements $x,y$ of $X$ are said to be {\em comparable} if $x \leq y$ or $y \leq x$, otherwise they are said to be {\em incomparable}.
If $x<y$ and there is no element $z$ such that $x<z<y$ then we say that $y$ {\em covers} $x$.
Posets are represented by their \emph{Hasse diagram} in which each element is a vertex in the plane, and where there is a line segment or curve that goes upward from $x$ to $y$ whenever $y$ covers $x$.
See Figure~\ref{fig:dual} for an example.
A subset of a poset in which every pair of elements is comparable is called a {\em chain}. 
A subset of a poset in which no two distinct elements are comparable is called an {\em antichain}.
A poset is an {\em antichain poset} (resp.~\emph{total order}) if the set of its elements is an antichain (resp.~chain).
We call poset induced by $S\subseteq X$, denoted $P[S]$, the suborder restricted on the elements of $S$ only; $P-S$ is the poset $P[X\setminus S]$.
A~set $I\subseteq X$ is an {\em ideal} of $P$ if $x\in I$ and $y\leq x$ imply $y\in I$. 
If $x\in I$ and $x\leq y$ imply $y\in I$, then $I$ is called {\em filter} of $P$.
Note that the complement of an ideal is a filter, and vice versa. 
For~every $x\in P$ we associate the {\em principal ideal of $x$} (or simply {\em ideal of $x$}) denoted by $\downarrow\,x$ and defined by $\downarrow x=\{y\in X \mid y\leq x\}$. 
The {\em principal filter of} $x\in X$ is the dual $\uparrow x=\{y\in X \mid x\leq y\}$.
The set of all subsets of $X$ is denoted by $2^X$, and the set of all ideals of $P$ by~$\mathcal{I}(P)$.
Clearly, $\I(P)\subseteq 2^X$ and $\I(P)=2^X$ whenever $P$ is an antichain poset.
If $S$ is a subset of $X$, we respectively denote by $\downarrow S$ and $\uparrow S$ the two sets defined by $\downarrow S=\bigcup_{x\in S} \downarrow x$ and $\uparrow S=\bigcup_{x\in S} \uparrow x$. 
We~respectively denote by $\Min(S)$ and $\Max(S)$ the sets of minimal and maximal elements of $S$ w.r.t.~$\leq$.

The next definition is central in this paper.

\begin{definition}
Let $P=(X,\leq)$ be a partial order and $B^+$, $B^-$ be two antichains of $P$.
We~say that $B^+$ and $B^-$ are {\em dual} in $P$ whenever $\downarrow B^+\,\cup \uparrow B^-=X$ and $\downarrow B^+\,\cap \uparrow B^-=\emptyset$.
\end{definition}

Note that deciding whether two antichains $B^+$ and $B^-$ of $P$ are dual can be done in polynomial time in the size of $P$ by checking whether ${B^-=\Min(P-\!\downarrow B^+)}$, or equivalently if ${B^+=\Max(P-\!\uparrow B^-)}$.
Notations $B^+$ and $B^-$ in fact come from these equalities.
However, the task becomes difficult when the poset is not fully given, but only an implicit coding---of possibly logarithmic size in the size of $P$---is given.
This is usually the case when considering dualization problems in lattices.

A {\em lattice} is a poset in which every two elements have a {\em supremum} (also called {\em join}) and a {\em infimum} (also called a {\em meet}); see \cite{davey2002introduction,gratzer2011lattice}.
In this paper however, only the next two characterizations from \cite{birkhoff1940lattice} will suffice.
We denote by {\em Boolean lattice} any poset isomorphic to $(2^X,\subseteq)$ for some set $X$; such a lattice is also called {\em hypercube}.
We denote by {\em distributive lattice} any poset isomorphic to $(\I(P),\subseteq)$ for some partially ordered set $P=(X,\leq)$.
Then, $X$ and $P$ are called {\em implicit coding} of the lattice and we denote by $\L(X)$ and $\L(P)$ the two lattices coded by $X$ and $P$.
Clearly, every Boolean lattice is a distributive lattice where $P$ is an antichain poset, as $\I(P)=2^X$ for such $P$. 
In~fact, it can be easily seen that each comparability $x\leq y$ in $P$ removes from $(2^X,\subseteq)$ the Boolean lattice given by the interval $[y,X\setminus \{x\}]$, i.e., the elements containing $y$ but not $x$.
At last, observe that $\L(P)$ may be of exponential size in the size of $P$: this is in particular the case when the lattice is Boolean, i.e., when $P$ is an antichain poset.
An example of a distributive lattice coded by the ideals of a poset is given in Figure~\ref{fig:dual}.

\begin{figure}
  \center
  \includegraphics[scale=\figurescale]{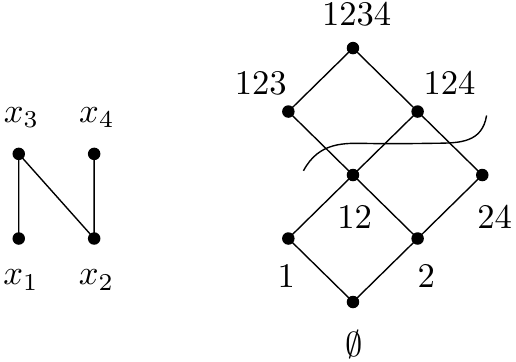}
  \caption{A poset $P=(X,\leq)$ (left) that codes the lattice $\L(P)=(\I(P),\subseteq)$ (right), and the border (curved line) formed by the two dual antichains $\B^+=\{\{x_1,x_2\}, \{x_2,x_4\}\}$ and $\B^-=\{\{x_1,x_2,x_3\}, \{x_1,x_2,x_4\}\}$ of $\L(P)$.
  For better readability, ideals are denoted by the indexes of their elements in the lattice, i.e., $123$ stands for $\{x_1,x_2,x_3\}$.}
  \label{fig:dual}
\end{figure}

In this paper, we are concerned with the following decision problem and one of its two generation versions.

\begin{decproblem}
  \problemtitle{Dualization in Distributive Lattices given by the Ideals of a Poset (\Dual{})}
  \probleminput{A poset $P=(X,\leq)$ and two antichains $\B^+,\B^-$ of $\L(P)$.}
  \problemquestion{Are $\B^+$ and $\B^-$ dual in $\L(P)$?}
\end{decproblem}

\begin{problemgen}
  \problemtitle{Generation version of \Dual{} (\DualEnum{})}
  \probleminput{A poset $P=(X,\leq)$ and an antichain $\B^+$ of $\L(P)$.}
  \problemquestion{The dual antichain $\B^-$ of $\B^+$ in $\L(P)$.}
\end{problemgen}

We stress the fact that the lattice $\L(P)$ is not given in any of the two problems defined above.
Only $P$ is given, which is a crucial point.
Hence, \DualEnum{} can be reformulated without any mention of the lattice, namely as the enumeration of all inclusion-wise minimal ideals of $P$ that are not a subset of any ideal in $\B^+$, i.e., as the enumeration of the set 
\[
  \B^-=\Min_\subseteq\{I\in \I(P) \mid I\not\subseteq B~\text{for any}~B\in \B^+\}.
\]
Then, computing a first solution to this problem is easy, as we start from $I=X$ as an ideal, and remove its maximal elements until it is a minimal ideal such that $I\not\subseteq B$ for any $B\in\B^+$.
However, it is still open whether the problem can be solved in output quasi-polynomial time.
To~date, the best known algorithm runs in output sub-exponential time $2^{O(n^{0,67} \log^3 N)}$ where $N=|\B^+|+|\B^-|$, and where $P$ is given as a $n\times n$ matrix \cite{babin2017dualization}.
Output quasi-polynomial time algorithms running in time $\poly(N,n) + N^{o(\log N)}$ are known for several subclasses, including distributive lattices coded by products of chains \cite{elbassioni2009algorithms}, or distributive lattices coded by the ideals of an interval order \cite{defrain2019dualization}.

If the poset is an antichain, i.e., if the lattice is Boolean, then this problem calls for enumerating every inclusion-wise minimal subset of $X$ that is not a subset of any $B\in\B^+$, or equivalently, that intersects every set in $\H=\{X\setminus B \mid B \in \B^+\}$. 
Under such a formulation, it is easily seen that the dualization in Boolean lattices is equivalent to \TransEnum{} (hence to \DomEnum{}), where $Tr(\H)=\B^-$; see~\cite{nourine2014dualization,nourine2016encyclopedia}.
In~this case, the best known algorithm runs in output quasi-polynomial time $N^{o(\log N)}$ where $N=|\B^+|+|\B^-|$, and the existence of an output-polynomial time algorithm remains open after decades of research \cite{eiter1995identifying,fredman1996complexity,eiter2008computational}.
Due to its equivalence with \DomEnum{}, the complexity of the dualization in Boolean lattice has been precised under various restrictions on graph classes and parameters.
Among these results, output-polynomial time algorithms were given for degenerate \cite{eiter2003new}, line \cite{kante2012neighbourhood,golovach2015incremental}, split \cite{kante2014enumeration}, chordal \cite{kante2015chordal}, triangle-free graphs \cite{bonamy2019triangle}, graphs of bounded clique-width \cite{courcelle2009linear}, LMIM-width \cite{Golovach2018}, etc.
Other classes of graphs remain open, including co-bipartite (as the problem in this case is as hard as \TransEnum{}, hence as hard as in general graphs \cite{kante2014enumeration}), or unit disk graphs~\cite{kante2008minimal,golovach2016enumerating}.

The aim of this work is to generalize \TransEnum{} and \DomEnum{} to the dualization in distributive lattices, in order to obtain similar finer characterizations on the difficulty of the later problem, using (hyper)graph parameters.

\section{Transversal-ideals}\label{sec:transideal}

We generalize \TransEnum{} to the enumeration of the minimal ideals of a poset with the transversal property.
We show that the obtained problem is equivalent to the dualization in distributive lattices.

Let $\H$ be a hypergraph and $P_\H$ be a partial order on vertices of $\H$.
Let $I$ be a subset of vertices of $\H$.
We say $I$ is a {\em transversal-ideal} of $\H$ w.r.t.~$P_\H$ if it is an ideal of $P_\H$, and a transversal of~$\H$.
It~is minimal if it does not contain any transversal-ideal as a proper subset.
We denote by $\ITr(\H,P_\H)$ the set of minimal transversal-ideals of $\H$ w.r.t.~$P_\H$, and define the problem of generating $\ITr(\H,P_\H)$ as follows.

\begin{problemgen}
  \problemtitle{Minimal Transversal-Ideals Enumeration (\ITransEnum{})}
  \probleminput{A hypergraph $\H$ and a partial order $P_\H$ on vertices of $\H$.}
  \problemquestion{The set $\ITr(\H,P_\H)=\Min_\subseteq\{I\in \I(P_\H) \mid I\ \text{is a transversal of}\ \H\}$.}
\end{problemgen}

\begin{figure}
  \center
  \includegraphics[page=1,scale=\figurescale]{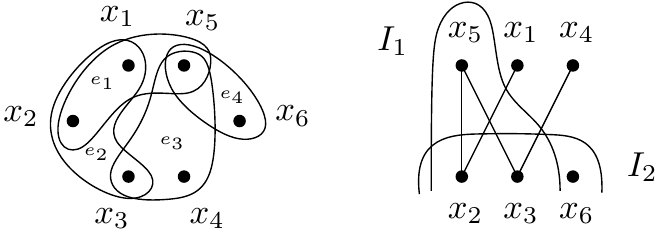}
  \caption{A hypergraph $\H=\{e_1,e_2,e_3,e_4\}$ (left) and a partial order $P_\H$ on vertices of $\H$ (right), where $e_1=\{x_1,x_2,x_5\}$, $e_2=\{x_1,x_2,x_3\}$, $e_3=\{x_3,x_4,x_5\}$ and $e_4=\{x_5,x_6\}$.
  The minimal transversal-ideals for this instance are $I_1=\{x_2,x_3,x_5\}$ and $I_2=\{x_2,x_3,x_6\}$.
  Note that $I_1$ is the ideal of two minimal transversals $T_1=\{x_2,x_5\}$ and $T_2=\{x_3,x_5\}$.}
  \label{fig:running-example-H}
\end{figure}

An instance of this problem is given in Figure~\ref{fig:running-example-H}.
Observe that as for \DualEnum{}, computing a first solution to \ITransEnum{} is easily done by starting with $I=V(\H)$ as a transversal-ideal, and greedily reducing it until it is minimal.
It is worth pointing out that in the case where $P_\H$ is an antichain poset, then the minimal transversal-ideals of $\H$ w.r.t.~$P_\H$ are exactly the minimal transversals of $\H$, and the two problems \ITransEnum{} and \TransEnum{} are equivalent.
In the general case, however, a minimal transversal-ideal of $\H$ w.r.t.~$P_\H$ may contain several minimal transversals of $\H$; see Figure~\ref{fig:running-example-H} for an example.
If~$P_\H$ is a total order, then $\H$ admits a unique minimal transversal-ideal no matter the number of minimal transversals.
Consequently, the size of $Tr(\H)$ may be exponential in the size of $Tr(\H,P_\H)$.

It easily observed that \DualEnum{} is a particular case of \ITransEnum{}, where ${P=P_\H=(X,\leq)}$ and where to every ideal $B\in \B^+$ corresponds a hyperedge $\e=X\setminus B$ of $\H$, as in that case ideal $I$ is a transversal-ideal of $\H$ (i.e., $I\cap\e\neq\emptyset$ for all $\e\in \H$) if and only if $I\not\subseteq B$ for any $B\in \B^+$.
In other words, \DualEnum{} appears as a particular case of \ITransEnum{} in which $\H$ defines a collections of filters of $P_\H$.
However, $\H$ may not be a collection of filters of $P_\H$ in general, and \ITransEnum{} may appear as a tougher problem at first glance.
Nevertheless, we~show that the two problems are equivalent by showing that hypergraphs that do not share this property can be closed in the poset with no impact on the solutions to enumerate.

For any hypergraph $\H$ and poset $P_\H$, we denote by $\uparrow \H$ the {\em filter-closed hypergraph} of $\H$ w.r.t.~$P_\H$ defined by $V(\uparrow\H)=V(\H)$ and $\E(\uparrow\H)=\Min_\subseteq\{\uparrow \e\mid \e\in \E(\H)\}$.
Observe that $|\uparrow \H|\leq |\H|$.

\begin{lemma}\label{lemma:hypergraph-closure}
  Let $I$ be an ideal of $P_\H$.
  Then $I$ is a transversal-ideal of $\H$ if and only if it is a transversal-ideal of $\uparrow \H$.
  In particular, $\ITr(\H,P_\H)=\ITr(\uparrow \H, P_\H)$.
\end{lemma}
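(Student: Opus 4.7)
The plan is to reduce the statement to a single pointwise observation about how an ideal $I$ of $P_\H$ interacts with a hyperedge $\e$ and its upward closure $\uparrow \e$. Once that observation is in place, the equivalence for transversal-ideals and the equality of minimal transversal-ideals follow essentially by unwinding definitions.

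First I would prove the following key claim: for every ideal $I$ of $P_\H$ and every $\e\in \E(\H)$, we have $I\cap \e\neq \emptyset$ if and only if $I\cap \uparrow\!\e\neq \emptyset$. The forward direction is immediate from $\e\subseteq\, \uparrow\!\e$. For the converse, take $x\in I\cap \uparrow\!\e$; by definition of $\uparrow\!\e$ there is some $y\in \e$ with $y\leq x$, and since $I$ is an ideal of $P_\H$ and contains $x$, it must also contain $y$, hence $y\in I\cap \e$. This is really the only content of the lemma.

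Next, I would translate this into a statement about transversality. Applying the claim to every $\e\in \E(\H)$ shows that $I$ (an ideal of $P_\H$) is a transversal of $\H$ if and only if $I$ is a transversal of the hypergraph $\H' = (V(\H), \{\uparrow\!\e \mid \e\in \E(\H)\})$. Then I would invoke the standard Sperner reduction, already recalled in the preliminaries, noting that replacing a hypergraph by the inclusion-minimal elements of its edge set does not change its set of transversals: if $\uparrow\!\e \subseteq\, \uparrow\!\e'$ then any set meeting $\uparrow\!\e$ automatically meets $\uparrow\!\e'$. Hence $I$ is a transversal of $\H'$ if and only if it is a transversal of $\uparrow\H = \Min_\subseteq\{\uparrow\!\e \mid \e\in \E(\H)\}$. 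Chaining the equivalences gives: $I$ is a transversal-ideal of $\H$ w.r.t.~$P_\H$ iff $I$ is a transversal-ideal of $\uparrow \H$ w.r.t.~$P_\H$.

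For the second part, since the underlying poset $P_\H$ is unchanged, the family of ideals of $P_\H$ being considered is identical in both cases, so the two sets $\{I\in \I(P_\H)\mid I\ \text{is a transversal of}\ \H\}$ and $\{I\in \I(P_\H)\mid I\ \text{is a transversal of}\ \uparrow\H\}$ coincide by the previous paragraph; taking $\Min_\subseteq$ of the same collection yields $\ITr(\H,P_\H)=\ITr(\uparrow\H,P_\H)$. There is no real obstacle: the only delicate point is making sure that one uses the hypothesis ``$I$ is an ideal'' precisely in the converse direction of the pointwise claim, as this is exactly where the downward-closure of $I$ is needed to pull a witness $x\in \uparrow\!\e$ back to a witness $y\in \e$.
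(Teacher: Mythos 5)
Your proof is correct and takes essentially the same route as the paper's: the pointwise claim that an ideal $I$ meets $\e$ if and only if it meets $\uparrow\e$ (forward by $\e\subseteq\,\uparrow\e$, backward by pulling a witness down using the ideal property) is exactly the paper's argument. The only difference is that you explicitly justify passing to the inclusion-minimal edges $\Min_\subseteq\{\uparrow\e \mid \e\in\E(\H)\}$, a step the paper leaves implicit via the standard Sperner reduction.
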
 

\begin{proof}
  Let $I$ be an ideal of $P_\H$ and $\e$ be an edge of $\H$.
  We show that $I$ intersects $\e$ if and only if it intersects $\uparrow \e$.
  Clearly if $I$ intersects $\e$ then it intersects $\uparrow \e$ as $\e\subseteq \uparrow \e$.
  Let us assume that $I$ intersects $\uparrow \e$ and let $x\in I\cap \uparrow \e$.
  Then there exists $y\in \e$ such that $y\leq x$.
  Since $I$ is an ideal, $y\in I$.
  Thus $I\cap \e\neq\emptyset$.
  Hence $\ITr(\H,P_\H)=\ITr(\uparrow \H, P_\H)$.
\end{proof}

\begin{lemma}\label{lemma:incident-edge-incl}
  If $\uparrow \H=\H$ then $x\leq y$ implies $\E_x\subseteq \E_y$ for all $x,y\in V(\H)$.
\end{lemma}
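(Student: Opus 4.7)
The plan is to reduce the statement to the observation that, under the hypothesis $\uparrow \H = \H$, each hyperedge of $\H$ is a filter of $P_\H$. Once this is established, the inclusion $\E_x \subseteq \E_y$ for $x \leq y$ follows instantly from the definition of a filter.

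First I would unpack the hypothesis. By definition,
\[
  \E(\uparrow \H) = \Min_{\subseteq}\{\uparrow \e \mid \e \in \E(\H)\},
\]
so every edge of $\uparrow \H$ is of the form $\uparrow \e$ for some $\e \in \E(\H)$. The assumption $\uparrow \H = \H$ therefore forces every $\e' \in \E(\H)$ to coincide with $\uparrow \e$ for some $\e \in \E(\H)$. Applying the up-closure operator again and using idempotence $\uparrow\uparrow \e = \uparrow \e$, I get $\uparrow \e' = \e'$, i.e.\ every hyperedge of $\H$ is upward closed in $P_\H$ (a filter).

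With this in hand, the conclusion is immediate. Take any $x, y \in V(\H)$ with $x \leq y$, and any $\e \in \E_x$. By definition $x \in \e$, and since $\e$ is a filter and $x \leq y$, we obtain $y \in \e$, so $\e \in \E_y$. Hence $\E_x \subseteq \E_y$.

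The only subtle point — and thus the main (mild) obstacle — is the first step: one must not confuse $\uparrow \H = \H$ with the weaker statement that each edge is dominated by some filter-closure. The $\Min_\subseteq$ in the definition of $\uparrow \H$ could in principle delete an edge $\e$ when another edge $\e'$ satisfies $\uparrow \e' \subsetneq \uparrow \e$, so I need to justify that every surviving edge of $\H$ really equals its own filter-closure, which is exactly what the argument above does.
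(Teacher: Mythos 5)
Your proof is correct and follows essentially the same route as the paper: observe that $\uparrow\H=\H$ forces every edge to be a filter of $P_\H$ (i.e.\ $\e=\uparrow\e$), and then $x\le y$ with $x\in\e$ immediately gives $y\in\e$, hence $\E_x\subseteq\E_y$. Your extra step justifying $\e'=\uparrow\e'$ via idempotence of the up-closure (rather than just asserting it, as the paper's terse proof does despite the $\Min_\subseteq$ in the definition of $\uparrow\H$) is a welcome bit of added care, not a deviation in approach.
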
 

\begin{proof}
  Let $\H$ such that $\uparrow \H=\H$, $x,y\in V(\H)$ such that $x\leq y$, and $E\in \E_x$.
  Since $E=\uparrow E$ and $x\leq y$, $y\in E$.
  Hence the desired result.
\end{proof}

In what follows, we say that $P_\H$ is a poset of {\em incident edge inclusion} of $\H$ if $x\leq y$ implies $\E_x\subseteq \E_y$.
By Lemma~\ref{lemma:incident-edge-incl}, every partial order $P_\H$ is a poset of incident edge inclusion of $\uparrow\H$.
We conclude with the following result.

\begin{theorem}\label{thm:maintrans}
  \DualEnum{} and \ITransEnum{} are equivalent, even when restricted to posets of incident edge inclusion.
\end{theorem}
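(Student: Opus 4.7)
The plan is to exhibit polynomial reductions in both directions, using complementation between ideals and filters of the poset to translate between the two formulations, and then invoke Lemmas~\ref{lemma:hypergraph-closure} and~\ref{lemma:incident-edge-incl} to land inside the restricted setting.

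For the direction \DualEnum{}$\,\leq\,$\ITransEnum{}, given an instance $(P,\B^+)$ with $P=(X,\leq)$, I would set $P_\H := P$ and build $\H$ on vertex set $X$ with edge set $\{X\setminus B : B\in\B^+\}$. Since each $B\in\B^+$ is an ideal of $P$, its complement is a filter, so every edge satisfies $\uparrow\e=\e$, hence $\uparrow\H=\H$; by Lemma~\ref{lemma:incident-edge-incl} the poset $P_\H$ is then of incident edge inclusion, placing the constructed instance inside the restricted class. For any ideal $I$ of $P$, the equivalence $I\cap(X\setminus B)\neq\emptyset \Leftrightarrow I\not\subseteq B$ shows that the minimal transversal-ideals of this $(\H,P_\H)$ coincide with the dual antichain $\B^-$ described right after the definition of \DualEnum{}.

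For the converse direction \ITransEnum{}$\,\leq\,$\DualEnum{}, given $(\H,P_\H)$ on vertex set $X$, I would first replace $\H$ by $\uparrow\H$: by Lemma~\ref{lemma:hypergraph-closure} this preserves $\ITr(\H,P_\H)$, and the edges of $\uparrow\H$ are now filters of $P_\H$, pairwise incomparable thanks to the $\Min_\subseteq$ appearing in the definition of $\uparrow\H$. Their complements therefore form an antichain $\B^+:=\{X\setminus\e : \e\in\uparrow\H\}$ of $\L(P_\H)$, and the equivalence $I\cap\e\neq\emptyset \Leftrightarrow I\not\subseteq X\setminus\e$ shows that the dual $\B^-$ of $\B^+$ in $\L(P_\H)$ equals $\ITr(\uparrow\H,P_\H)=\ITr(\H,P_\H)$. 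In particular, this covers instances with $P_\H$ already of incident edge inclusion.

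The main subtlety is verifying that the constructions genuinely land in the correct problem class: that $\B^+$ in the second reduction really is an antichain of \emph{ideals}, not just of pairwise incomparable sets, which is precisely what the filter-closure step secures; and that the first reduction automatically yields a $P_\H$ of incident edge inclusion, handled by Lemma~\ref{lemma:incident-edge-incl}. Both reductions are linear in the size of the input and preserve the output set verbatim, so delay and output-polynomial complexity bounds transfer in both directions, establishing the equivalence even under the stated restriction.
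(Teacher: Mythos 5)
Your proposal is correct and follows essentially the same route as the paper: the identical constructions $\H=\{X\setminus B \mid B\in\B^+\}$ in one direction and the filter-closure $\uparrow\H$ with $\B^+=\{X\setminus\e \mid \e\in\uparrow\H\}$ in the other, justified by Lemmas~\ref{lemma:hypergraph-closure} and~\ref{lemma:incident-edge-incl} and the equivalence $I\cap(X\setminus B)\neq\emptyset \Leftrightarrow I\not\subseteq B$. Your only addition is to spell out explicitly that the first reduction already produces a filter-closed hypergraph, hence an incident edge inclusion poset, which the paper leaves implicit via Lemma~\ref{lemma:incident-edge-incl}.
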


\begin{proof}
  It follows from the equivalence $I\not\subseteq B~\text{for any}~B\in \B^+$ if and only if $I\cap {(X\setminus B)\neq\emptyset}$ $\text{for all}~B\in \B^+$, that \DualEnum{} is a particular case of \ITransEnum{}, where $\H=\{X\setminus B \mid B\in \B^+\}$ and $\ITr(\H,P_\H)=\B^-$.

  We show that \ITransEnum{} reduces to \DualEnum{}.
  Let $(\H,P_\H)$ be an instance of the first problem and $\G=\uparrow \H$ be the filter-closed hypergraph of $\H$ w.r.t.~$P_\H$.
  Clearly, $\G$ can be computed in polynomial time in the sizes of $\H$ and $P_\H$, and $\B^+=\{X\setminus \e \mid \e\in \E(\G)\}$ defines an antichain of $\L(P_\H)$.
  By Lemma~\ref{lemma:hypergraph-closure}, $\ITr(\H,P_\H)=\ITr(\G,P_\H)$.
  As~$\ITr(\G,P_\H)=\{I\in \I(P_\H) \mid I\not\subseteq B~\text{for any}~B\in \B^+\}$, we deduce that $\B^-=\ITr(\G,P_\H)$ where $\B^-$ is the dual antichain of $\B^+$ in $\L(P_\H)$.
  Hence that \ITransEnum{} can be solved using an algorithm for \DualEnum{} on $P_\H$ and $\B^+$.
\end{proof}

\section{Dominating-ideals}\label{sec:domideal}

We generalize \DomEnum{} to the enumeration of the minimal ideals of a poset with the domination property.
We show that the obtained problem is equivalent to the dualization in distributive lattices.
This will allow us to study the complexity of the problem under various restrictions on graph classes and poset types.

Let $G$ be a graph and $P_G$ be a partial order on vertices of $G$. 
Let $D$ be a subset of vertices of $G$.
We say that $D$ is a {\em dominating-ideal} of $G$ w.r.t.~$P_G$ if it is an ideal of $P_G$ and a dominating set of~$G$.
It~is minimal if it does not contain any dominating-ideal as a proper subset.
Note that a dominating-ideal $I$ is minimal if and only if $\priv(I,x)\neq \emptyset$ for all $x\in \Max(I)$. 
We denote by $\ID(G,P_G)$ the set of minimal dominating-ideals of $G$ w.r.t.~$P_G$, and define the problem of generating $\ID(G,P_G)$ as follows.

\begin{problemgen}
  \problemtitle{Minimal dominating-ideals enumeration (\IDomEnum{})}
  \probleminput{A graph $G$ and a partial order $P_G$ on vertices of $G$.}
  \problemquestion{The set $\ID(G,P_G)=\Min_\subseteq\{I\in \I(P_G) \mid I\ \text{dominates}\ G\}$.}
\end{problemgen}

\begin{figure}
  \center
  \includegraphics[page=2,scale=\figurescale]{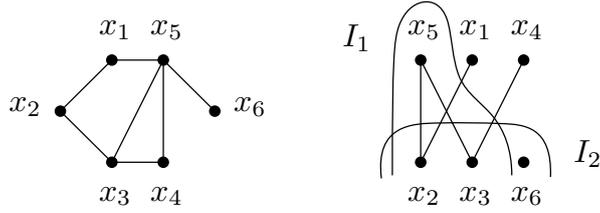}
  \caption{A graph $G$ (left) and a partial order $P_G$ (right) on vertices of $G$ such that $\N(G)=\H$, where $\H$ is the hypergraph defined in Figure~\ref{fig:running-example-H}.
  The minimal dominating-ideals for this instance are $I_1=\{x_2,x_3,x_5\}$ and $I_2=\{x_2,x_3,x_6\}$.}
  \label{fig:running-example-G}
\end{figure}

An instance of this problem is given in Figure~\ref{fig:running-example-G}.
Observe that as for the classical case when $P$ is an antichain, \IDomEnum{} naturally appears as a particular case of \ITransEnum{} where $\ID(G,P_G)=\ITr(\N(G),P_G)$.
The rest of this section is devoted to the proof of their equivalence.

In the following, we say that $P_G$~is a {\em neighborhood inclusion poset} of~$G$ if $x\leq y$ implies $N[x]\subseteq N[y]$, and that $P_G$~is a {\em weak neighborhood inclusion poset} of~$G$ if at least one of $N[x]\subseteq N[y]$ and $N[x]\supseteq N[y]$ holds whenever $x\leq y$.
Clearly, every neighborhood inclusion poset is a weak neighborhood inclusion poset.
It can be seen that the first restriction is closely related to the one of Lemma~\ref{lemma:incident-edge-incl}, as to every neighborhood inclusion poset of a graph corresponds an incident edge inclusion poset in $\N(G)$.
Henceforth, neighborhood inclusion posets naturally appear when considering dualization problems in distributive lattices.

The aforementioned equivalences are the following.

\begin{theorem}\label{thm:maindom}
  \ITransEnum{} and \IDomEnum{} are equivalent, even when restricted to:
  \begin{enumerate}
    \item bipartite graphs;\label{item:main1}
    \item split graphs and weak neighborhood inclusion posets; and\label{item:main2}
    \item co-bipartite graphs and neighborhood inclusion posets.\label{item:main3}
  \end{enumerate}
\end{theorem}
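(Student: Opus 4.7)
The plan is to prove both directions of the equivalence, with the reduction from \ITransEnum{} to \IDomEnum{} being the substantial one. The direction \IDomEnum{} $\leq$ \ITransEnum{} needs no restriction on the input: for every pair $(G,P_G)$, the identity $\ID(G,P_G)=\ITr(\N(G),P_G)$ turns an instance of \IDomEnum{} into one of \ITransEnum{} with the exact same minimal solutions. Moreover, if $P_G$ is a (weak) neighborhood inclusion poset of $G$, it is automatically an incident-edge inclusion poset of $\N(G)$, keeping us within the restrictions of Theorem~\ref{thm:maintrans}.

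For the reverse direction, given $(\H,P_\H)$, I would first apply Lemma~\ref{lemma:hypergraph-closure} to assume $\uparrow \H = \H$, so that by Lemma~\ref{lemma:incident-edge-incl} we have $x \leq y \Rightarrow \E_x \subseteq \E_y$. This preprocessing is what matches the order on $V(\H)$ with the closed-neighborhood order of the constructed graph. The backbone of each construction is the bipartite incidence graph $I(\H)$ with parts $X = V(\H)$ and $Y = \{y_\e \mid \e \in \H\}$, modified by adding edges inside $X$, inside $Y$, and/or small auxiliary gadgets, exactly as in the reduction of Kanté, Limouzy, Nourine and Uno~\cite{kante2014enumeration} from \TransEnum{} to \DomEnum{}. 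The poset $P_G$ will then extend $P_\H$ to the new vertices.

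For case~(\ref{item:main1}), $I(\H)$ together with suitable pendant gadgets on the $Y$-side (which preserve bipartiteness) is used to eliminate spurious minimal dominating-ideals and to force a bijective projection onto transversal-ideals of $\H$; the poset $P_G$ just copies $P_\H$ on $X$ and places the remaining vertices as an antichain incomparable to $X$. For case~(\ref{item:main2}), we turn $X$ into a clique to obtain a split graph; the clique adds the common set $X$ to every $N[x]$, so the inclusions $\E_x\subseteq\E_y$ lift to $N[x]\subseteq N[y]$ and $P_G$ becomes a weak neighborhood inclusion poset once $Y$ is left as an antichain. For case~(\ref{item:main3}), we additionally turn $Y$ into a clique, obtaining a co-bipartite graph; the fact that $\uparrow \H$ is Sperner forces $Y$ to be an antichain in $P_G$ as well, and together with the incident-edge inclusion on $X$ this makes $P_G$ a genuine neighborhood inclusion poset. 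A matching lemma in each case then shows that the minimal dominating-ideals of $(G,P_G)$ correspond bijectively to the minimal transversal-ideals of $(\H,P_\H)$.

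The hard part will be case~(\ref{item:main3}): the combined requirements of co-bipartiteness and of a full neighborhood inclusion poset leave essentially no room for auxiliary gadget vertices, so the correspondence must be established directly on the bare clique-plus-incidence construction. In particular, one must rule out spurious minimal dominating-ideals that lie entirely in $Y$ (corresponding to edge covers of $X$ by elements of $\H$) or that mix $X$ and $Y$ in a way not coming from a transversal of $\H$. I expect this to follow from a careful private-neighbor analysis, exploiting that a single element on either side already dominates its whole clique, so any extra vertex in a minimal dominating-ideal must serve as a witness for a specific private neighbor that cannot coexist with the ideal constraint imposed by $P_G$.
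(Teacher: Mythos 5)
Your skeleton matches the paper's proof (both directions; the forward direction via $\ID(G,P_G)=\ITr(\N(G),P_G)$; filter-closure through Lemmas~\ref{lemma:hypergraph-closure} and~\ref{lemma:incident-edge-incl}; constructions built on the bipartite incidence graph $I(\H)$ with sides completed into cliques as needed). However, the constructions you sketch omit exactly the two ingredients that make the paper's reductions work: an auxiliary vertex $v$, and poset comparabilities involving $Y$. In the paper, Item~\ref{item:main1} adds $v$ adjacent to all of $X$ and sets $P_G=P_\H\cup\{x<y\mid x\in X,\ y\in Y\}$; Item~\ref{item:main2} completes $X$, adds a \emph{universal} vertex $v$ and sets $P_G=P_\H\cup\{v<y\mid y\in Y\}$ (which is precisely why the poset is only a \emph{weak} neighborhood inclusion poset); Item~\ref{item:main3} completes both sides and adds $v$ adjacent to $X$ only, keeping $P_G=P_\H$. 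These additions are what eliminate minimal dominating-ideals meeting $Y$: any ideal containing some $y\in Y$ contains $X$ (resp.\ $v$), which already dominates $G$, or, in the co-bipartite case, the need to dominate $v$ collapses every $Y$-meeting solution to a pair $\{x,y\}$, so only $O(|V(G)|^2)$ extra solutions have to be discarded.

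Your versions leave $Y$ incomparable to $X$ (cases 1 and 2) and drop $v$ everywhere, and this gap cannot be closed by ``pendant gadgets'' or by the private-neighbor analysis you anticipate for case 3. Concretely, in your case 2 and case 3 graphs (where $X$ is a clique, $Y$ is untouched or a clique, and there is no $v$), every inclusion-minimal set of $Y$-vertices whose hyperedges cover $X$ is a minimal dominating set and, since subsets of $Y$ are ideals of your $P_G$, a minimal dominating-ideal; there can be exponentially many such edge covers, they genuinely satisfy the private-neighbor condition, and they bear no relation to $\ITr(\H,P_\H)$, so the claimed bijection (and any output-polynomial filtering) fails. Similarly, in case 1 with $Y$ incomparable, mixed solutions using $Y$-vertices to dominate parts of $X$ survive in exponential number. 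There is also a structural red flag in your case 2: without $v$ and with $Y$ an antichain incomparable to $X$, your $P_G$ is a full neighborhood inclusion poset on a split graph, so by Theorem~\ref{thm:split} the constructed instances are solvable with polynomial delay; a correct reduction of \ITransEnum{} (equivalently \DualEnum{}, by Theorem~\ref{thm:maintrans}) to such instances would make \DualEnum{} polynomial delay, which is exactly what the weak comparabilities $v<y$ are there to prevent.
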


\begin{proof}
  Clearly, \IDomEnum{} is a particular case of \ITransEnum{} where $\ID(G,P_G)=\ITr(\N(G),P_G)$.

  We show that \ITransEnum{} reduces to \IDomEnum{}.
  Let $(\H,P_\H)$ be a non-trivial instance (such that $\H\neq\emptyset$) of \ITransEnum{}.
  Note that by Lemma~\ref{lemma:hypergraph-closure}, we can restrict ourselves to the case where $\H=\uparrow \H$. 
  Hence by Lemma~\ref{lemma:incident-edge-incl}, $x\leq y$ in $P_\H$ implies $\E_x\subseteq \E_y$.

  Consider the bipartite incidence graph $I(\H)$ of $\H$ of bipartition $X=V(\H)$ and $Y=\{y_\e \mid \e\in \E(\H)\}$, where $xy_\e\in E(I(\H))$ if and only if $x\in X$, $y_\e \in Y$ and $x\in \e$; see~Section~\ref{sec:preliminaries} and Figure~\ref{fig:inc-bipartite}.
  A first observation is the following:

  \begin{observation}\label{obs:openneighborhoodinclusion}
  Let $x,y\in P_\H$.
  Then $x\leq y$ implies $N(x)\subseteq N(y)$ in $I(\H)$.
  \end{observation}

  The remainder of the proof is separated into three parts: we will adapt the construction of the bipartite incidence graph according to each item of the theorem.
  
  Let us first consider Item~\ref{item:main1}.
  Let $G$ be the graph obtained from $I(\H)$ by adding a single vertex $v$ connected to every vertex of $X$.
  Then $G$ is bipartite with bipartition $X$ and $Y\cup\{v\}$.
  Let $P_G$ be the poset obtained from $P_\H$ by making every $y\in Y$ greater than every $x\in X$, and $v$ incomparable with every other vertex, i.e., $P_G= P_\H \cup \{x<y \mid x\in X,\ y\in Y\}$.
  We prove the following.

  \begin{claim}\label{claim:bipartite}
    Let $I\subsetneq V(\H)$.
    Then $I\in \ITr(\H,P_\H)$ if and only if $I\cup\{v\}\in \ID(G,P_G)$.
  \end{claim}

  \begin{proof}[Proof of the claim]
    Let $I\subsetneq V(\H)$ such that $I\in \ITr(\H,P_\H)$.
    As $\H$ is non-empty, $I\neq \emptyset$.
    By construction, $I$ is an ideal of $P_G$ and it is a minimal dominating-ideal of subset $Y$, i.e., $Y\subseteq N[I]$ and $Y\not\subseteq N[I\setminus \{x\}]$ for any $x\in \Max(I)$.
    By hypothesis $I\neq V(\H)$, hence $I$ does not dominate $G$, and $I\cup\{v\}$ does; $v$ is here to dominate elements of $X$ that are not in the transversal.
    Since $v$ is not adjacent to $Y$, it does not steal private neighbors to vertices in $I$. Hence $I\cup \{v\}$ is a minimal dominating-ideal of $G$.
    Let $I\subsetneq V(\H)$ such that $I\in \ID(G,P_G)$.
    Note that $y\not\in I$ for any $y\in Y$ as $X\subseteq \downarrow y$ and $X$ dominates $G$. 
    As $\H$ is non-empty, $I\cap X\neq\emptyset$.
    By hypothesis, $I\neq X$.
    Thus $v$ has a private neighbor in $X$ and $\priv(I,x)\subseteq Y$ for all $x\in I$.
    Hence $I$ is a minimal transversal-ideal of $\H$.
    \cqed
  \end{proof}

  Let us now consider Item~\ref{item:main2}.
  Let $G$ be the graph obtained from $I(\H)$ by completing $X$ into a clique, and by adding a single vertex $v$ connected to every vertex of the graph, i.e., $v$ is universal in $G$.
  Then $G$ is split with clique $X\cup\{v\}$ and independent set~$Y$.
  Let $P_G$ be the poset obtained from $P_\H$ by making every $y\in Y$ greater than~$v$, i.e., $P_G=P_\H \cup \{v<y \mid y\in Y\}$.
  We prove the following two claims.

  \begin{claim}\label{claim:weak}
    $P_G$ is a weak neighborhood inclusion poset on~$G$.
  \end{claim}

  \begin{proof}[Proof of the claim]
    Clearly, $x\leq y$ either implies $x,y\in X$, or both $x=v$ and $y\in Y$. 
    In~the first case, it follows from Observation~\ref{obs:openneighborhoodinclusion} that $N[x]\subseteq N[y]$ as $X\cup v$ induces a clique.
    In~the other case, $N[v]\supseteq N[y]$ as $v$ is universal.
    \cqed
  \end{proof}

  \begin{claim}\label{claim:split}
    Let $I\subseteq V(\H)$.
    Then $I\in \ITr(\H,P_\H)$ if and only if $I\in \ID(G,P_G)$, $I\neq \{v\}$.
  \end{claim}

  \begin{proof}[Proof of the claim]
    Let $I\subseteq V(\H)$ such that $I\in \ITr(\H,P_\H)$.
    As $\H$ is non-empty, $I\neq \emptyset$.
    By construction, $I$ is an ideal of $P_G$, $I\neq\{v\}$, and it is a minimal dominating-ideal of $Y$.
    As $I$ dominates $X\cup\{v\}$, it is a minimal dominating-ideal of $G$.
    Let $I\subseteq V(\H)$ such that $I\in \ID(G,P_G)$ and $I\neq \{v\}$.
    Note that $y\not\in I$ for any $y\in Y$ as $v\in \downarrow y$ and $v$ dominates $G$.
    Since $I\neq \{v\}$, $I\subseteq X$. 
    Since $X$ induces a clique, $\priv(I,x)\subseteq Y$ for all $x\in I$.
    Hence $I$ is a minimal transversal-ideal of $\H$.
    \cqed
  \end{proof}

  We now consider Item~\ref{item:main3}.
  Let $G$ be the graph obtained from $I(\H)$ by adding a single vertex $v$ connected to every vertex of $X$, and by completing both $X$ and $Y$ into cliques.
  Then $G$ is co-bipartite with cliques $X\cup\{v\}$ and $Y$.
  Let $P_G=P_\H$.
  We prove the following two claims.

  \begin{claim}\label{claim:strong}
    $P_G$ is a neighborhood inclusion poset on~$G$.
  \end{claim}

  \begin{proof}[Proof of the claim]
    Let $x,y\in P_G$ such that $x\leq y$.
    It follows from Observation~\ref{obs:openneighborhoodinclusion} that $N[x]\subseteq N[y]$ as $X\cup v$ induces a clique.
    \cqed
  \end{proof}

  \begin{claim}\label{claim:co-bipartite}
    Let $I\subseteq V(\H)$.
    Then $I\in \ITr(\H,P_\H)$ if and only if $I \in\ID(G,P_G)$ and $I\not\in \{\{x,y\} \mid x\in X\cup \{v\},\ y\in Y\}$.
  \end{claim}

  \begin{proof}[Proof of the claim]
    Let $I\subseteq V(\H)$ such that $I\in \ITr(\H,P_\H)$.
    As $\H$ is non-empty, $I\neq \emptyset$.
    By construction, $I$ is an ideal of $P_G$, $I\not\in \{\{x,y\} \mid x\in X\cup \{v\}$, $y\in Y\}$, and it is a minimal dominating-ideal of $Y$.
    As $I$ dominates $X\cup\{v\}$, it is a minimal dominating-ideal of $G$.
    Let $I\subseteq V(\H)$ such that $I\in \ID(G,P_G)$ and $I\not\in \{\{x,y\} \mid x\in X\cup \{v\}$, $y\in Y\}\}$.
    Note that $y\not\in I$ for any $y\in Y$ or else, as $v$ is non adjacent to any vertex in $Y$, $I$ must contain one vertex of $X\cup \{v\}$ to dominate $G$.
    Then $I$ does not contain any other vertex as it dominates $G$, and $I\in \{\{x,y\} \mid x\in X\cup \{v\}$, $y\in Y\}\}$, a case excluded by hypothesis.
    Moreover, $v\not\in I$ as otherwise, $I$ must contain some $x\in X$ to dominate $Y$ and $N[v]\subseteq N[x]$.
    Hence $I\subseteq X$.
    Since $X$ induces a clique, $\priv(I,x)\subseteq Y$ for all $x\in I$.
    Hence $I$ is a minimal transversal-ideal of $\H$.
    \cqed
  \end{proof}

  The proof of the theorem follows from Claims~\ref{claim:bipartite}, \ref{claim:weak}, \ref{claim:split}, \ref{claim:strong} and \ref{claim:co-bipartite}, observing that $G=I(\H)$ is constructed in polynomial time in the sizes of $\H$ and $P_\H$, and that $\ITr(\H,P_\H)$ can be enumerated with polynomial delay from $\ID(G,P_G)$ on the constructed graph and poset.
  Indeed, in the case of Item~\ref{item:main1} only one extra solution (namely $I=V(\H)$) has to be handled separately.
  In the case of Item~\ref{item:main2}, only one solution (namely $I=\{v\}$) has to be discarded.
  In the case of Item~\ref{item:main3}, at most $|V(G)|^2$ solutions (namely every subsets of $V(G)$ of size two) have to be discarded.
  This concludes the proof.
\end{proof}

\section{Tractable cases for dominating-ideals enumeration}\label{sec:tractable}

In the following, we show that combined restrictions left by Theorem~\ref{thm:maindom} are tractable (see Figure~\ref{fig:sum}), using existing algorithms and techniques from the literature for the enumeration of minimal dominating sets in split and triangle-free graphs \cite{kante2014enumeration,bonamy2019triangle}.
Our results rely on the following important property.

\begin{proposition}\label{prop:domantichain}
  Let $G$ be a graph and $P_G$ be a weak neighborhood inclusion poset on~$G$.
  Then, every minimal dominating set of $G$ is an antichain of $P_G$.
  Hence there is a bijection between minimal dominating sets of $G$ and their ideal in $P_G$.
  If in addition $P_G$ is a neighborhood inclusion poset, then $\Max(D)$ dominates $G$ whenever $D$ does.
\end{proposition}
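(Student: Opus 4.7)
The plan is to treat the three claims in order, with the first doing most of the real work. For claim (1), I would argue by contradiction: suppose $D$ is a minimal dominating set of $G$ but contains two comparable elements, say $x\leq y$ in $P_G$ with $x\neq y$. By the weak neighborhood inclusion hypothesis, either $N[x]\subseteq N[y]$ or $N[y]\subseteq N[x]$ holds. In the first case I would show $x$ has no private neighbor: any $p\in N[x]$ lies in $N[y]$ and $y\in D\setminus\{x\}$, so $p\in N[D\setminus\{x\}]$, hence $\priv(D,x)=\emptyset$, contradicting the characterization of minimality recalled in Section~\ref{sec:preliminaries}. The second case is symmetric, giving $\priv(D,y)=\emptyset$. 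Either way the contradiction shows $D$ is an antichain of $P_G$.

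For claim (2), I would simply observe that once $D$ is an antichain, $D=\Max(\downarrow\! D)$, so the map $D\mapsto \downarrow\! D$ sending a minimal dominating set to its principal ideal is injective, with inverse $I\mapsto \Max(I)$; this is the bijection between minimal dominating sets of $G$ and the ideals they generate in $P_G$.

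For claim (3), now assuming $P_G$ is a neighborhood inclusion poset (so $x\leq y$ implies $N[x]\subseteq N[y]$, the stronger property), I would pick any $v\in V(G)$ and use that $D$ dominates $G$ to find some $x\in D$ with $v\in N[x]$. Since $D$ is finite, there exists $y\in \Max(D)$ with $x\leq y$; the neighborhood inclusion property then gives $v\in N[x]\subseteq N[y]\subseteq N[\Max(D)]$. As $v$ was arbitrary, $\Max(D)$ dominates $G$.

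There is no serious obstacle here; the content is essentially that the weak inclusion rules out comparabilities inside any minimal dominating set (via the private neighbor argument), and that strong inclusion additionally lets one lift a dominator to a maximal element of $D$ without losing coverage. The only point that deserves care is to keep the weak/strong distinction straight: the weak version suffices for (1) and (2) because one vertex in the comparable pair swallows the other's closed neighborhood in at least one direction, but for (3) one genuinely needs the direction $x\leq y\Rightarrow N[x]\subseteq N[y]$ to route $v$ upward to $\Max(D)$.
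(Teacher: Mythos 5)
Your proposal is correct and follows essentially the same route as the paper: comparability inside a dominating set forces one closed neighborhood to contain the other (you phrase this via empty private neighborhoods, the paper via $D\setminus\{x\}$ or $D\setminus\{y\}$ still dominating, which is the same content), the bijection is the standard antichain--ideal correspondence via $D\mapsto\,\downarrow\! D$ and $I\mapsto\Max(I)$, and the strong inclusion hypothesis lets each dominated vertex be routed up to a maximal element of $D$. No gaps.
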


\begin{proof}
  Let $D$ be a dominating set of $G$ and $x,y\in D$ be two comparable elements of $P_G$.
  If~$P_G$ is a weak neighborhood inclusion poset, then either $N[x]\subseteq N[y]$ or $N[x]\supseteq N[y]$.
  Thus, either $D\setminus \{x\}$ or $D\setminus \{y\}$ dominates $G$ and we deduce that every minimal dominating set of $G$ is an antichain of $P_G$.
  If $P_G$ is a neighborhood inclusion poset and $x\leq y$, then $D\setminus \{x\}$ dominates $G$ and we deduce that $\Max(D)$ dominates $G$.
  Since the set of antichains and the set of ideals of a poset are in bijection, we conclude to a bijection between minimal dominating sets of $G$ and their ideal in $P_G$.
\end{proof}

A consequence of Proposition~\ref{prop:domantichain} is the following equality.
\begin{equation}\label{eqn:main}
    \ID(G,P_G)=\Min_\subseteq\{\downarrow D \mid D\in \D(G)\}.\tag{$5$}
\end{equation}

Note that instances that verify this property are not trivially tractable, as two of the constructed instances in the proof of Theorem~\ref{thm:maindom} satisfy Proposition~\ref{prop:domantichain}, despite the fact that the problem on such instances is \DualEnum{}-hard, hence \TransEnum{}-hard.

\subsection{Split graphs and neighborhood inclusion posets}

In \cite{kante2014enumeration}, the authors give a polynomial delay algorithm to enumerate minimal dominating sets in split graphs.
Their algorithm relies on the two observations that if $G$ is a split graph of maximal independent set $S$, and clique $C$, then the set of intersections of minimal dominating sets of $G$ with $C$ is in bijection with $\D(G)$, and it forms an independence system.
A pair $(X,\S)$ where $\S\subseteq 2^X$ is an {\em independence system} if $\emptyset\in \S$ and if $S\in \S$ implies that $S'\in \S$ for all $S'\subseteq S$.
We show that these observations can be generalized in our case, giving a polynomial delay algorithm to enumerate $\ID(G,P_G)$ whenever $G$ is split and $P_G$ is a neighborhood inclusion poset.

In what follows, we follow the notations of \cite{kante2014enumeration} to denote the intersection of a dominating set $D$ with some set $W\subseteq V(G)$, namely $D_W=D\cap W$.
We extend this notation to the set of minimal dominating sets as follows: 
\[
  \D_W(G)\eqdef\{D_W \mid D\in \D(G)\}.
\]

\begin{proposition}[\cite{kante2014enumeration}]\label{prop:kante-characterization}
  Let $G$ be a split graph with maximal independent set $S$, clique~$C$, and let $D$ be a minimal dominating set of $G$.
  Then $D_S=S\setminus N(D_C)$.
  Furthermore, $\D_C(G)=\{A \subseteq C \mid \forall x \in A,\ \priv(A,x)\cap S\neq\emptyset\}$ and
  \begin{enumerate}
      \item $\D_C(G)$ and $\D(G)$ are in bijection,\label{item:k-c-i1}
      \item $(C,\D_C(G))$ is an independence system.\label{item:k-c-i2}
  \end{enumerate}
\end{proposition}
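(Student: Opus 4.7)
The plan is to establish the four assertions in sequence, with the equality $D_S = S \setminus N(D_C)$ serving as the structural backbone from which the bijection and the independence system property will follow, and with the characterization via private neighbors being the technical heart of the proof.

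First I would prove $D_S = S \setminus N(D_C)$ for any $D \in \D(G)$. For $\supseteq$: if $x \in S$ is not in $N(D_C)$, then $x$'s neighborhood $N[x] \cap D_C = \emptyset$, and since $S$ is an independent set no vertex of $D_S \setminus \{x\}$ is adjacent to $x$ either, so domination of $x$ forces $x \in D$, hence $x \in D_S$. For $\subseteq$: if $x \in D_S$ and some $y \in D_C$ is adjacent to $x$, then since $C$ is a clique $y$ dominates all of $C$, and $x$'s only potential neighbors lie in $C$, so $N[x] \subseteq N[y] \subseteq N[D \setminus \{x\}]$, giving $\priv(D,x) = \emptyset$ and contradicting minimality of $D$.

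Next I would prove the private-neighbor characterization of $\D_C(G)$. For the $(\subseteq)$ direction, fix $D \in \D(G)$, let $A = D_C$, and pick any $x \in A$. The key observation is that any private neighbor of $x$ in $D$ must lie in $S$ once $|A| \ge 2$, because any $y \in C \setminus \{x\}$ is adjacent to every other element of $A \setminus \{x\}$ by the clique property, and $x$ itself lies in $N(A \setminus \{x\})$ for the same reason. When $|A| = 1$ (so $A = \{x\}$), I use $D_S = S \setminus N(x)$ from the previous step together with the maximality of $S$ to note that $x$ has some neighbor in $S$, and any such neighbor lies in $\priv(A,x) \cap S$ because $A \setminus \{x\} = \emptyset$ trivializes the privacy condition. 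Once a witness $y \in \priv(D,x) \cap S$ is identified, it automatically witnesses $\priv(A,x) \cap S \neq \emptyset$ since shrinking $D$ to $A$ only shrinks $N[D \setminus \{x\}]$. For the $(\supseteq)$ direction, given $A$ satisfying the private-neighbor condition, I set $D := A \cup (S \setminus N(A))$ and verify directly that $D$ dominates $G$ (the independent-set side is covered by construction, while $C$ is covered either by any vertex of $A$ via the clique, or, when $A = \emptyset$, by $S$ via the maximality of $S$) and that each vertex has a private neighbor (for $x \in A$, a witness from $\priv(A,x) \cap S$ transfers to $\priv(D,x)$ because $S$-vertices cannot be reached from $D_S$; for $v \in D_S$, $v$ itself is private since $v \notin N(A)$ and $S$ is independent).

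Finally, I would derive the remaining two claims. Item~(1) is immediate: the map $D \mapsto D_C$ is injective by the formula $D_S = S \setminus N(D_C)$ (which reconstructs $D$ from $A = D_C$), and it is surjective onto $\D_C(G)$ by definition, so it is a bijection. For Item~(2), the empty set lies in $\D_C(G)$ because $S$ itself is a minimal dominating set (dominating by maximality, minimal because each $x \in S$ is private to itself), and downward closure follows from the characterization: if $A' \subseteq A$ and $x \in A'$, then $N[A' \setminus \{x\}] \subseteq N[A \setminus \{x\}]$ gives $\priv(A,x) \subseteq \priv(A',x)$, so $\priv(A,x) \cap S \neq \emptyset$ implies $\priv(A',x) \cap S \neq \emptyset$. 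The main obstacle I anticipate is the case analysis on $|A|$ in the characterization, in particular the $|A| = 1$ case where the private neighbor of $x$ in $D$ may lie in $C$ rather than $S$, forcing us to invoke the maximality of $S$ to produce a separate $S$-witness for $\priv(A,x)$.
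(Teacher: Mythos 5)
Your proof is correct; the paper itself does not prove this proposition but imports it from \cite{kante2014enumeration}, and your argument is the standard one behind that result (compute $D_S=S\setminus N(D_C)$ from minimality and the clique/independent-set structure, characterize $D_C$ by private neighbors in $S$ with the $|A|=1$ case handled via maximality of $S$, and reconstruct $D=A\cup(S\setminus N(A))$ to get the bijection and downward closure). One small step worth making explicit: when transferring a witness $s\in\priv(A,x)\cap S$ to $\priv(D,x)$, observe that $s\in N(A)$, hence $s\notin S\setminus N(A)=D_S$, which together with the independence of $S$ yields $s\notin N[D\setminus\{x\}]$.
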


In the following, we consider a split graph $G$ and a neighborhood inclusion poset $P_G$.
As $P_G$ is a neighborhood inclusion poset, Equality~\eqref{eqn:main} applies.
The next proposition allows us to consider a decomposition of $G$ into a maximal independent set $S$, and a clique $C$, such that $S\subseteq \Min(P_G)$.

\begin{proposition}\label{prop:hypothesis-SminP}
  Let $G$ be a split graph and $P_G$ be a neighborhood inclusion poset on~$G$.
  Then there exists a decomposition of $G$ into a maximal independent set $S$, and a clique $C$, such that $S\subseteq \Min(P_G)$.
\end{proposition}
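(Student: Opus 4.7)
The plan is to start from an arbitrary split decomposition $(S,C)$ of $G$ and repeatedly swap a non-minimal element $x \in S$ with a suitable element of $C$ until $S \subseteq \Min(P_G)$. To formalize this, suppose $x \in S$ with $x \notin \Min(P_G)$, and choose $y \in \Min(P_G)$ with $y < x$ (such $y$ exists since $P_G$ is finite). The first observation I would establish is that $y \in C$: since $P_G$ is a neighborhood inclusion poset, $N[y] \subseteq N[x]$, so $y \in N[x]$; if $y$ were in $S$, then $y \neq x$ would force $y$ to be adjacent to $x$, contradicting the independence of $S$. The second observation is that $x$ is adjacent to every vertex of $C$: since $y \in C$, we have $C \subseteq N[y] \subseteq N[x]$, and as $x \in S$ is not in $C$, $x$ must be adjacent to every vertex of $C$.

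With these two facts in hand, I would perform the swap by setting $S' = (S \setminus \{x\}) \cup \{y\}$ and $C' = (C \setminus \{y\}) \cup \{x\}$. Verifying that $C'$ is a clique is immediate from the second observation. For $S'$ to be independent, I need $y$ to have no neighbors in $S \setminus \{x\}$; but $N[y] \cap S \subseteq N[x] \cap S = \{x\}$ since $S$ is independent and $x \in S$, so this is clear. The delicate step is checking that $S'$ remains a \emph{maximal} independent set: for any $v \in C' = (C\setminus\{y\}) \cup \{x\}$, I need a neighbor in $S'$. If $v = x$, then $x$ is adjacent to $y \in S'$ (since $y \in N[x]$ and $y \neq x$). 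If $v \in C \setminus \{y\}$, then $v$ is adjacent to $y$ because $C$ is a clique, so $v$ has neighbor $y \in S'$.

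Finally, I would argue termination using the potential function $\Phi(S,C) = |S \setminus \Min(P_G)|$. Each swap removes the non-minimal vertex $x$ from $S$ and inserts $y \in \Min(P_G)$ instead; since membership in $\Min(P_G)$ depends only on $P_G$ and not on the decomposition, $\Phi$ strictly decreases. As $\Phi$ is a non-negative integer bounded by $|V(G)|$, the process terminates in at most $|V(G)|$ swaps, yielding a decomposition with $S \subseteq \Min(P_G)$.

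The main obstacle I anticipate is the verification that the swap preserves maximal independence of $S'$; this is where the two structural observations (that $y \in C$ and that $x$ dominates $C$) combine crucially, via the neighborhood inclusion axiom, to cover the vertex $x$ itself by the newly inserted $y$. Everything else is bookkeeping once the right swap is chosen.
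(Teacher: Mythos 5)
Your argument is essentially the paper's: both proofs fix a split partition and repeatedly swap a non-minimal $x\in S$ with a smaller element $y\in C$, and your two observations ($y\in C$ and $C\subseteq N[y]\subseteq N[x]$) are exactly what makes the swap legal. In fact, in a split graph these inclusions force $N[y]=N[x]$ (one has $C\cup\{x\}\subseteq N[y]\subseteq N[x]\subseteq C\cup\{x\}$ since $S$ is independent), which is the twin property the paper invokes to see at once that the exchange preserves the decomposition; your direct verification of the clique, independence, and maximality conditions, together with the explicit potential-function termination argument, spells out what the paper leaves implicit.

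The one point to repair is the initialization. You start from an \emph{arbitrary} split partition, but the proposition asks for $S$ to be a maximal independent set, and if no swap is ever triggered (i.e., $S\subseteq\Min(P_G)$ already) your procedure outputs the initial $S$, which need not be maximal --- e.g., a single edge $uv$ plus an isolated vertex $w$ with the antichain poset and the partition $S=\{w\}$, $C=\{u,v\}$. The fix is immediate and is what the paper does: start from a split partition maximizing $|S|$ (equivalently, with $S$ a maximal independent set); your swap step then preserves this property, as your own maximality check shows --- indeed that check never uses maximality of the old $S$, only adjacency to the inserted vertex $y$.
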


\begin{proof}
  Let $S,C$ be a decomposition of $G$ that maximizes the independent set $S$.
  If $x\in S$ and $x\not\in \Min(P_G)$, then there exists some $y_x\in C$ such that $y_x\leq x$, $N[x]=N[y_x]$, and thus such that $S\setminus \{x\}\cup\{y_x\}$ and $C\setminus \{y_x\}\cup\{x\}$ is still a decomposition of $G$ that maximizes the independent set.
\end{proof}

We now define 
\[
  \D_C(G,P_G)\eqdef\{D_C \mid D\in \D(G)~\text{and}~\downarrow D\in \ID(G,P_G)\},
\]
and show that Proposition~\ref{prop:kante-characterization} extends for this set.

\begin{lemma}\label{lemma:subindependencesystem}
  Let $G$ be a split graph with maximal independent set $S$ and clique~$C$, and $P_G$ be a neighborhood inclusion poset on~$G$.
  Then $\D_C(G,P_G)$ and $\ID(G,P_G)$ are in bijection, and $\D_C(G,P_G)\subseteq \D_C(G)$.
\end{lemma}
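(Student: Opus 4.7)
The plan is to factor the desired bijection through the intermediate set
\[
  \mathcal{F} \eqdef \{D \in \D(G) \mid \downarrow D \in \ID(G,P_G)\},
\]
which, by the very definition of $\D_C(G,P_G)$, is mapped onto $\D_C(G,P_G)$ by $D \mapsto D_C$. The inclusion $\D_C(G,P_G) \subseteq \D_C(G)$ is then immediate: every element of $\D_C(G,P_G)$ is, by definition, the $C$-trace of some $D \in \D(G)$ and hence belongs to $\D_C(G)$.

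First, I would use Proposition~\ref{prop:kante-characterization}.\ref{item:k-c-i1} to obtain a bijection $\mathcal{F} \to \D_C(G,P_G)$ given by $D \mapsto D_C$, with explicit inverse $D_C \mapsto D_C \cup (S \setminus N(D_C))$. This is merely the restriction of the bijection $\D(G) \leftrightarrow \D_C(G)$ to the subset $\mathcal{F}$, and by definition its image is $\D_C(G,P_G)$.

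Second, I would show that $D \mapsto \downarrow D$ is a bijection from $\mathcal{F}$ onto $\ID(G,P_G)$. Surjectivity is exactly Equation~\eqref{eqn:main}: every $I \in \ID(G,P_G)$ equals $\downarrow D$ for some $D \in \D(G)$, and such a $D$ necessarily lies in $\mathcal{F}$. For injectivity I would invoke Proposition~\ref{prop:domantichain}: since $P_G$ is a neighborhood inclusion poset, every minimal dominating set $D$ of $G$ is an antichain of $P_G$, so $D = \Max(\downarrow D)$ and $D$ is recovered from its principal-ideal closure.

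Composing the inverse of the first bijection with the second yields the desired bijection $\D_C(G,P_G) \leftrightarrow \ID(G,P_G)$. I do not anticipate any genuine obstacle; the proof reduces to bookkeeping on top of Propositions~\ref{prop:kante-characterization} and~\ref{prop:domantichain} together with Equation~\eqref{eqn:main}. The only mildly delicate point is the injectivity step, which rests on the antichain property of minimal dominating sets and thus, in the end, on the neighborhood inclusion hypothesis placed on $P_G$.
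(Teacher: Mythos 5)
Your proof is correct and follows essentially the same route as the paper: both arguments combine Proposition~\ref{prop:kante-characterization}, Proposition~\ref{prop:domantichain}, and Equality~\eqref{eqn:main}, yielding the same explicit correspondence $A \mapsto \downarrow\bigl(A \cup (S\setminus N(A))\bigr)$ with inverse $I \mapsto \Max(I)\cap C$. Your factorization through the intermediate family $\{D\in\D(G)\mid \downarrow D\in\ID(G,P_G)\}$ merely spells out the bookkeeping that the paper leaves implicit.
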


\begin{proof}
  The bijection between $\D_C(G,P_G)$ and $\ID(G,P_G)$ follows from Propositions~\ref{prop:domantichain}, \ref{prop:kante-characterization} and Equality~\eqref{eqn:main}, where to every $A\in \D_C(G,P_G)$ corresponds a unique $I\in \ID(G,P_G)$ such that $I=\downarrow(A\cup (S\setminus N(A)))$, and to every $I\in \ID(G,P_G)$ corresponds a unique $A\in \D_C(G,P_G)$ such that $A=\Max(I)\cap C$.

  The inclusion $\D_C(G,P_G)\subseteq \D_C(G)$ follows from Equality~\eqref{eqn:main}, as $A\in \D_C(G,P_G)$ implies $A=D_C$ for some $D\in \D(G)$ such that $\downarrow D\in \D(G,P_G)$.
\end{proof}

\begin{lemma}\label{lemma:split-IS}
  Let $G$ be a split graph with maximal independent set $S\subseteq \Min(P_G)$ and clique~$C$, and $P_G$ be a neighborhood inclusion poset on~$G$.
  Then $(C,\D_C(G,P_G))$ is an independence system that can be enumerated with polynomial delay given $G$ and $P_G$.
\end{lemma}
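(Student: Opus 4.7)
The plan is to prove both assertions of the lemma---the independence-system structure and the polynomial-delay enumeration---through a single local, polynomial-time testable characterization of membership in $\D_C(G,P_G)$. Combining Equation~\eqref{eqn:main} with the observation that any proper sub-ideal of $\downarrow f(A)$ must omit at least one maximal element (hence an element of $f(A)=A\cup(S\setminus N(A))$), I would establish: $A\in\D_C(G,P_G)$ if and only if $A\in\D_C(G)$ and every $x\in f(A)$ admits a \emph{private vertex} in the ideal, i.e., some $v\in N[x]$ with $v\notin N[\downarrow f(A)\setminus\{x\}]$. This condition is clearly testable in polynomial time; and $\emptyset\in\D_C(G,P_G)$ because $\downarrow f(\emptyset)=S$ and no proper sub-ideal of $S$ dominates $G$ (as $S$ is itself a minimal dominating set).

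Closure under subsets then follows by case analysis on $x\in f(A')$, given $A\in\D_C(G,P_G)$ and $A'\subseteq A$. When $x\in A'$ or $x\in S\setminus N(A)$, the element $x$ also lies in $f(A)$, so one tries to reuse the witness $v$ from $A$'s condition; the argument reduces to ruling out that a ``new'' element $y\in\downarrow f(A')\setminus\downarrow f(A)$ covers $v$. A direct computation using $S\subseteq\Min(P_G)$ shows such a $y$ must lie in $S$, be adjacent to some $a\in A\setminus A'$, avoid $\downarrow A$, and avoid $N(A')$; but then any common neighbor of $y$ and $v$ in $C$ would, by the clique structure on $C$, belong to $N[a]\subseteq N[\downarrow f(A)\setminus\{x\}]$, contradicting the privacy of $v$ in $\downarrow f(A)$. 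In the remaining case $x\in(S\setminus N(A'))\cap N(A\setminus A')$, the vertex $x$ serves as its own private in $\downarrow f(A')$: any $y\in\downarrow A'\cap C$ adjacent to $x$ would satisfy $y\leq a'$ for some $a'\in A'$, and by the neighborhood inclusion property of $P_G$ this forces $x\in N[a']\subseteq N(A')$, contradicting $x\notin N(A')$.

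With closure under subsets in hand, the polynomial-delay enumeration follows from the standard reverse-search / flashlight paradigm for independence systems, as used for split graphs in~\cite{kante2014enumeration}: fix an ordering $c_1,\ldots,c_{|C|}$ of $C$, and declare the parent of a nonempty $A\in\D_C(G,P_G)$ to be $A\setminus\{c_i\}$ where $c_i$ is the largest-indexed element of $A$, which lies in $\D_C(G,P_G)$ by closure. The traversal starts at $\emptyset$, and at each node one tests the at most $|C|$ candidate children using the private-vertex characterization. Between two consecutive outputs the algorithm traverses at most $|C|$ nodes and performs at most $|C|$ polynomial-time membership tests per node, yielding polynomial delay.

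The main technical hurdle is the case analysis for closure, specifically the step showing that no ``new'' element $y\in\downarrow f(A')\setminus\downarrow f(A)$ can cover the private witness $v$. This step simultaneously invokes all three structural hypotheses of the lemma---$S\subseteq\Min(P_G)$, the neighborhood-inclusion property of $P_G$, and the clique structure of $C$---and the argument would fail without any one of them. Once the private-vertex characterization is in place, the derivation of the enumeration algorithm is routine.
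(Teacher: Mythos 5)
Your proposal is correct in substance but proves the key structural claim by a genuinely different argument than the paper. The paper establishes the independence-system property by contradiction and exchange: assuming $A\setminus\{x\}\notin \D_C(G,P_G)$, it takes a witness $D^*$ with $\downarrow D^*\subsetneq\, \downarrow D'$, uses $S\subseteq\Min(P_G)$ and Proposition~\ref{prop:domantichain} to show $\Min(P_G)\cap D'\subseteq D^*$, and then builds $D^\circ=D^*\cup\{x\}\setminus\priv(A,x)$ whose ideal is strictly below $\downarrow D$, contradicting $A\in\D_C(G,P_G)$. You instead promote the paper's polynomial-time membership test (existence of a private vertex for every element of $f(A)=A\cup(S\setminus N(A))$ with respect to the ideal $\downarrow f(A)$) to a characterization of $\D_C(G,P_G)$, and verify it directly for any subset $A'\subseteq A$; this is constructive, gives closure under arbitrary subsets in one step rather than single-element deletions, and unifies the correctness of the membership test with the closure proof, at the cost of explicitly tracking how $\downarrow f(A')$ differs from $\downarrow f(A)$ (the paper's exchange argument avoids that bookkeeping by leaning on Proposition~\ref{prop:kante-characterization} and Equality~\eqref{eqn:main}). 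The enumeration part is identical in both: a fixed order on $C$, extension from $\emptyset$, and the polynomial membership test. One step of yours needs repair as written: in the case where you reuse the witness $v$, the sentence about a ``common neighbor of $y$ and $v$ in $C$'' is not what yields the contradiction; the correct one-liner, using exactly the facts you already isolated, is that $v\in N[y]$ with $y\in S$ forces either $v=y$ or $v\in C$, and in both cases $v\in N[a]$ for the neighbor $a\in A\setminus A'$ of $y$ (directly if $v=y$, by the clique on $C$ otherwise), with $a\in\,\downarrow f(A)\setminus\{x\}$ contradicting the privacy of $v$. You should also state (though it is immediate from Lemma~\ref{lemma:subindependencesystem} and Proposition~\ref{prop:kante-characterization}) that $A'\in\D_C(G)$, which your characterization requires; with these small fixes your route is complete.
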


\begin{proof}
  We first show that $(C,\D_C(G,P_G))$ is an independence system, by proving that if $A\in \D_C(G,P_G)$ and $A$ is not empty, then removing any element in $A$ yields another set in $\D_C(G,P_G)$.
  Let $\emptyset\neq A\subseteq C$ such that $A\in \D_C(G,P_G)$.
  Let us assume toward a contradiction that there exists $x\in A$ such that $A\setminus\{x\}\not\in \D_C(G,P_G)$.
  By~Proposition~\ref{lemma:subindependencesystem}, since $A\in \D_C(G,P_G)$ and
  since $\D_C(G)$ is an independence system, both $A$ and $A\setminus \{x\}$ belong to $\D_C(G)$.
  Let $D,D'\in \D(G)$ such that $A=D_C$ and $A\setminus\{x\}=D'_C$.
  By Proposition~\ref{prop:kante-characterization}, $D'=D\setminus \{x\}\cup \{s_1,\dots,s_k\}$ where $\{s_1,\dots,s_k\}=\priv(A,x)\cap S$.
  As by hypothesis $A\setminus\{x\}\not\in \D_C(G,P_G)$, there exists $D^*\in \D(G)$ such that $\downarrow D^*\subsetneq \downarrow D'$.
  Now, note that $\Min(P_G)\cap D'\subseteq D^*$, as otherwise there exists $w\in \Min(P_G)\cap D'\setminus D^*$, hence $D^*\subseteq \downarrow (D'\setminus \{w\})$,
  and we deduce that $\downarrow (D'\setminus \{w\})$ dominates $G$. But then by Proposition~\ref{prop:domantichain}, $\Max(\downarrow (D'\setminus \{w\}))=D'\setminus \{w\}$ dominates $G$, which contradicts the fact that $D'$ is a minimal dominating set.
  Therefore $\{s_1,\dots,s_k\}\subseteq D^*$ and as $\downarrow D^*\subsetneq \downarrow D'$, there exist $u\in D^*$ and $v\in D'\setminus \Min(P_G)\setminus D^*$ such that $u<v$.
  Note that $v\in D$ (as $D'\subseteq D$) and $v\neq x$ (as $x\not\in D'$).
  Let $D^\circ=D^*\cup \{x\}\setminus\{s_1,\dots,s_k\}$. 
  Clearly $D^\circ$ dominates $G$.
  As $\downarrow D^*\subseteq \downarrow D'$, $\downarrow D^\circ \subseteq \downarrow D$.
  Moreover $\downarrow D^\circ\subsetneq \downarrow D$ as $v\in D$ and $v\not\in D^\circ$.
  This contradict the hypothesis that $A\in \D_C(G,P_G)$.
  Hence $A\cup \{x\}\in \D_C(G,P_G)$.

  Now, note that testing whether some arbitrary set $A\subseteq C$ belongs to $\D_C(G,P_G)$ can be done in polynomial time in the sizes of $G$ and $P_G$:
  first compute the unique $D\in \D(G)$ such that {$D_C=A$}, using Proposition~\ref{prop:kante-characterization}, and test whether $\downarrow D\in \ID(G,P_G)$ by checking if ${\priv(\downarrow D, x)}\neq \emptyset$ for every $x\in D$.
  Hence, $\D_C(G,P_G)$ can be enumerated with polynomial delay by adding vertices of $C$ one by one from the empty set to maximal elements of $\D_C(G,P_G)$, checking at each step whether the new set belongs to $\D_C(G,P_G)$.
  Repetitions are avoided with a linear order on vertices of $C$; see~\cite{kante2014enumeration} for further details on the enumeration of an independence system.
\end{proof}

We conclude to a polynomial delay algorithm to enumerate $\ID(G,P_G)$ whenever $G$ is a split graph and $P_G$ is a neighborhood inclusion poset on $G$.
The algorithm first computes a decomposition $S,C$ that maximizes the independent set, makes $S$ a subset of $\Min(P_G)$ using Proposition~\ref{prop:hypothesis-SminP}, and enumerates the independence system $(C,\D_C(G,P_G))$ with polynomial delay using Lemma~\ref{lemma:split-IS}.
For every $A\in \D_C(G,P_G)$, it outputs the unique corresponding $I=\downarrow D$ such that ${D_C=A}$ using Lemma~\ref{lemma:subindependencesystem}.
This can clearly be done with polynomial delay.
We conclude with the following result.

\begin{theorem}\label{thm:split}
  There is a polynomial delay algorithm for \IDomEnum{} whenever $G$ is split and $P_G$ is a neighborhood inclusion poset.
\end{theorem}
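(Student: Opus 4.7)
The plan is to assemble the pieces developed in Lemmas~\ref{lemma:subindependencesystem} and~\ref{lemma:split-IS} together with Proposition~\ref{prop:hypothesis-SminP}, each of which handles one ingredient of the algorithm. First I would compute, in polynomial time, an arbitrary split partition of $V(G)$ into a maximal independent set $S$ and a clique $C$, and then apply the exchange argument of Proposition~\ref{prop:hypothesis-SminP} to modify this partition so that $S \subseteq \Min(P_G)$ while still being a maximal independent set of $G$. This preprocessing is done once, and its cost is polynomial in $|V(G)|$ and $|P_G|$.

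Next I would invoke Lemma~\ref{lemma:split-IS}, which guarantees that $(C, \D_C(G, P_G))$ is an independence system on $C$ enumerable with polynomial delay. Concretely, I would traverse $\D_C(G, P_G)$ by the standard binary-tree technique for independence systems: fix an arbitrary linear order on $C$, and at each node augment the current set $A$ by the smallest candidate element to avoid repetitions, using as membership oracle the polynomial-time test outlined at the end of the proof of Lemma~\ref{lemma:split-IS} (build the unique $D \in \D(G)$ with $D_C = A$, then verify $\downarrow D \in \ID(G,P_G)$ by checking that every $x \in D$ has a private neighbor). Since the oracle runs in polynomial time and each recursive call produces at least one output or prunes an entire subtree in polynomial time, we achieve polynomial delay over $\D_C(G, P_G)$.

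Finally, for every $A \in \D_C(G, P_G)$ produced by the enumeration, I would output its image under the bijection of Lemma~\ref{lemma:subindependencesystem}, that is, $I = \downarrow(A \cup (S \setminus N(A)))$, which is the unique minimal dominating-ideal whose set of maximal elements in $C$ is $A$. Since this transformation is computable in polynomial time and the bijection guarantees no repetitions, the overall procedure enumerates $\ID(G, P_G)$ with polynomial delay.

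The only mildly delicate step is confirming that the membership oracle for $\D_C(G, P_G)$ can indeed be evaluated in polynomial time without enumerating extraneous structures; this is immediate from Proposition~\ref{prop:kante-characterization}, which gives a closed formula $D = A \cup (S \setminus N(A))$ for reconstructing the unique minimal dominating set of $G$ with prescribed trace on $C$, and from Proposition~\ref{prop:domantichain}, which ensures that $\downarrow D \in \ID(G, P_G)$ if and only if $D$ itself remains minimal as an antichain representation, so no further combinatorial search is needed.
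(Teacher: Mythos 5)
Your proposal is correct and follows essentially the same route as the paper: compute a split decomposition with $S\subseteq \Min(P_G)$ via Proposition~\ref{prop:hypothesis-SminP}, enumerate the independence system $(C,\D_C(G,P_G))$ with polynomial delay using the membership oracle of Lemma~\ref{lemma:split-IS}, and output $\downarrow(A\cup(S\setminus N(A)))$ for each $A$ via the bijection of Lemma~\ref{lemma:subindependencesystem}. The only cosmetic difference is that the paper starts from a decomposition maximizing the independent set before applying the exchange argument, whereas you start from an arbitrary maximal one, which does not affect the correctness of the assembled algorithm.
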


\subsection{Triangle-free graphs and weak neighborhood inclusion posets}

In \cite{bonamy2019triangle}, the authors give an output-polynomial algorithm to enumerate minimal dominating sets in triangle-free graphs, i.e., graphs with no induced clique of size three.
These graphs include bipartite graphs.
We rely on this algorithm to show that $\ID(G,P_G)$ can be enumerated in output-polynomial time in the same graph class, whenever $P_G$ is a weak neighborhood inclusion poset on~$G$.
Our argument is based on the next observation.

\begin{proposition}\label{prop:star-in-poset}
  Let $G$ be a triangle-free graph and $P_G$ be a weak neighborhood inclusion poset on~$G$.
  Then $P_G$ is of height at most two, and it is partitioned into an antichain $A$ of isolated elements (that are both minimal and maximal in $P_G$), and a family $\S$ of $k$ disjoint stars%
  \footnote{$S_i$ induces a star in the Hasse diagram of $P_G$.} 
  $S_1,\dots,S_k$ of respective center $u_1,\dots,u_k$ such that either $S_i=\downarrow u_i$ or $S_i=\uparrow u_i$, for all $i\in [k]$.
  Furthermore, vertices in $S_i\setminus \{u_i\}$ are of degree one in $G$.
\end{proposition}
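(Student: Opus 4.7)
The plan is to derive everything from a single observation: any two comparable elements of $P_G$ are adjacent in $G$. Indeed, if $x \leq y$ then by weak neighborhood inclusion one of $N[x] \subseteq N[y]$ or $N[y] \subseteq N[x]$ holds, and since $x \in N[x]$ and $y \in N[y]$, in either case $x \in N[y]$, so $xy \in E(G)$.

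From this, a chain $x < y < z$ in $P_G$ would produce pairwise adjacencies among $x, y, z$, contradicting triangle-freeness. Hence $P_G$ has height at most two. I would then separate $P_G$ into the set $A$ of elements comparable only to themselves (the antichain of isolated elements, both minimal and maximal) and the remaining non-isolated elements, and show that the Hasse diagram restricted to the latter is a disjoint union of stars.

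The core lemma for the star structure is the following: if $u$ strictly covers two distinct elements $x$ and $y$ from below, then $N[x] \subseteq N[u]$ and $N[y] \subseteq N[u]$ are forced. Otherwise, say $N[u] \subseteq N[x]$; then $y \in N[u] \subseteq N[x]$ gives $x \sim y$, which together with $x \sim u$ and $y \sim u$ from the first observation yields a triangle. A symmetric statement holds for an element with two covers above. Now suppose some component of the Hasse diagram contains a path $a - b - c - d$ of length three. Since $P_G$ has height two this path alternates between levels; up to symmetry assume $a < b > c < d$. Then $b$ having two covers below forces $N[c] \subseteq N[b]$, while $c$ having two covers above forces $N[b] \subseteq N[c]$; hence $N[b] = N[c]$, and $d \in N[d] \subseteq N[c] = N[b]$ together with $b \sim c$ and $c \sim d$ produces a triangle $\{b,c,d\}$. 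As a connected bipartite graph with no such $P_4$ is necessarily a star, each component is a star.

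For the center-and-direction claim, if a star has at least two leaves then all leaves sit on the same level (by the star structure and the height-two bound), so the center $u_i$ is either above every leaf, giving $S_i = \downarrow u_i$, or below every leaf, giving $S_i = \uparrow u_i$; the core lemma then directly yields $N[v] \subseteq N[u_i]$ for every leaf $v$. For a two-element star $\{u, v\}$ with $v < u$, weak neighborhood inclusion provides at least one of $N[v] \subseteq N[u]$ or $N[u] \subseteq N[v]$, and I pick the center to be the vertex whose closed neighborhood contains the other's. In all cases a leaf $v$ satisfies $N[v] \subseteq N[u_i]$, so any neighbor $w \neq u_i$ of $v$ would lie in $N[u_i]$, producing a triangle $\{v, u_i, w\}$ together with the edge $v \sim u_i$; therefore $v$ has degree exactly one in $G$. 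The main subtlety, and only genuine obstacle, is handling two-element stars, where the center is not forced by the Hasse structure — this is resolved by letting the direction of weak neighborhood inclusion dictate the choice.
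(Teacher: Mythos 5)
Your proof is correct. The triangle-based ingredients are the same as in the paper (comparable elements are adjacent, so a $3$-chain gives a triangle; an inclusion in the ``wrong'' direction around a branching element gives a triangle; $N[v]\subseteq N[u_i]$ forces leaves to have degree one), and the decomposition into the isolated antichain $A$ plus Hasse components is identical. Where you differ is in how the star structure is obtained: the paper anchors the analysis at a single comparability $x<y$ of a component, lets its weak-inclusion direction name the center immediately, and then propagates --- every $z$ comparable with $y$ satisfies $N[z]\subseteq N[y]$ (else $xyz$ is a triangle), hence has degree one and lies below $y$, so $S=\downarrow y$ in one stroke; this handles two-element components with no case distinction. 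You instead prove a ``two covers force both inclusions toward the branching vertex'' lemma, exclude any path on four vertices in the Hasse diagram, invoke the (easy, unproved but standard) fact that a connected bipartite graph with no $P_4$ subgraph is a star, and only then choose the center and orientation --- which is why the two-element star needs your separate treatment via the weak-inclusion direction, exactly the step the paper's ordering dissolves for free. Your route cleanly separates the purely combinatorial shape of the Hasse diagram from the choice of center, at the cost of being slightly longer; just make explicit the small implicit facts you rely on, namely that at height two every comparability is a cover relation (so $\downarrow u_i$, resp.\ $\uparrow u_i$, is contained in the component of $u_i$, giving the set equality $S_i=\downarrow u_i$ or $S_i=\uparrow u_i$ and not merely the star shape), and a one-line justification of the no-$P_4$-implies-star claim.
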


\begin{proof}
  This situation is depicted in Figure~\ref{fig:posetstars}.
  We first show that $P_G$ is of height at most two.
  Suppose that there exist $x,y,z$ such that $x<y<z$.
  Then $xy,xz,yz\in E(G)$ which contradicts the fact $G$ is triangle-free.

  \begin{figure}
    \center
    \includegraphics[scale=\figurescale]{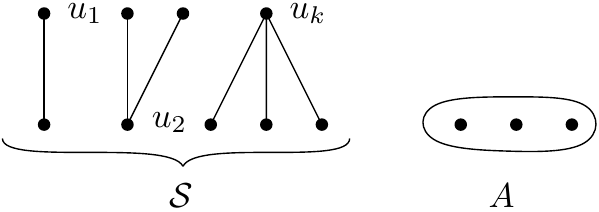}
    \caption{The situation of Proposition~\ref{prop:star-in-poset}.}
    \label{fig:posetstars}
  \end{figure}

  Let us now prove the rest of the proposition.
  Let $A=\Min(P_G)\cap\Max(P_G)$, and $B=P_G\setminus A$.
  Let $S\subseteq B$ be a connected component in the Hasse diagram of $P_G$, and let $x,y\in S$ such that $x<y$.
  Two symmetric cases arise depending on whether $N[x]\subseteq N[y]$ or $N[x]\supseteq N[y]$.
  If $N[x]\subseteq N[y]$ then $x$ is of degree one in $G$ (or else the other neighbor of $x$ would be connected to both $x$ and $y$ and would induce a triangle in $G$).
  Moreover, every other element $z\neq x$ that is comparable with $y$ verifies $N[z]\subseteq N[y]$ (or else it verifies $N[z]\supseteq N[y]$ and $xyz$ induces a triangle in~$G$), hence is of degree one (by previous remark).
  Also, it verifies $z<y$ as $P_G$ is of height at most two.
  Hence $S$ induces a star of center $y$ in the Hasse diagram of $P_G$, such that $S=\downarrow y$, and where every vertex in $S\setminus \{y\}$ is of degree one in $G$.
  The other case $N[x]\supseteq N[y]$ leads to the symmetric situation where $S=\uparrow x$ and where every vertex in $S\setminus \{x\}$ is of degree one in $G$.
\end{proof}

In the following, we denote by $\{v_i^1,\dots,v_i^l\}$ the set of branches of some star $S_i\in \S$, $i\in [k]$, and by $u_i$ its center.
Then, we denote by $G_{re}$ and $P_{G_{re}}$ the {\em reduced graph and poset} obtained from $G$ and $P_G$, where every star $S_i\in \S$ had its branches $\{v_i^1,\dots,v_i^l\}$ contracted into a single element $v_i$, and where every edge $u_iu_j$ that connects two distinct stars $S_i,S_j$ in $G$ has been removed.
We denote by $B_u$ and $B_v$ the sets $B_u=\{u_1,\dots,u_k\}$ and $B_v=\{v_1,\dots,v_k\}$.
The resulting graph is detailed below and is given in Figure~\ref{fig:decomposition1}.
Observe that $P_{G_{re}}$ is partitioned into an antichain $A$ of isolated elements (that are both minimal and maximal in $P_{G_{re}}$, and left untouched by our transformation), and a set $B=B_u\cup B_v=\{u_1,v_1,\dots,u_k,v_k\}$ of $k$ disjoint chains $u_iv_i$ (such that either $u_i<v_i$ or $v_i<u_i$), $i\in [k]$.
The graph $G_{re}$ is partitioned into one triangle-free graph induced by $A$ (left untouched by our transformation), and an induced matching $\{u_1v_1,\dots,u_kv_k\}$ ($B_u$ and $B_v$ induce two independent sets), where $v_i$ is disconnected from $A$, and $u_i$ is arbitrarily connected to $A$, for every $i\in [k]$.
Clearly, $G_{re}$ and $P_{G_{re}}$ can be constructed in polynomial time in the sizes of $G$ and $P_G$.
The following property is implicit in \cite{kante2014enumeration} and can also be found in the Ph.D.~thesis of Mary \cite{mary2013enumeration}.

\begin{figure}
  \center
  \includegraphics[scale=\figurescale]{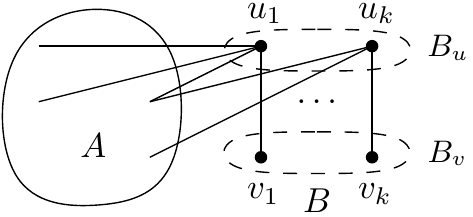}
  \caption{The decomposition $(A,B)$ of a reduced triangle-free graph $G_{re}$.}
  \label{fig:decomposition1}
\end{figure}

\begin{proposition}[\cite{mary2013enumeration,kante2014enumeration}]\label{prop:redundant}
  Let $G$ be a graph and $uv$ be an edge of $G$. 
  Then $\D(G)=\D(G-uv)$ whenever there exists $u'\neq u$, $v'\neq v$ such that $N_{G-uv}[u']\subseteq N_{G-uv}[u]$ and $N_{G-uv}[v']\subseteq N_{G-uv}[v]$.
  Such an edge $uv$ is called redundant.
\end{proposition}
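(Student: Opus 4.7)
Set $G' = G - uv$. Removing the edge $uv$ changes the closed neighborhood of a vertex only at $u$ or $v$, so both inclusions $\D(G) \subseteq \D(G')$ and $\D(G') \subseteq \D(G)$ can be proved in parallel: dominance is either trivial (from $E(G') \subseteq E(G)$) or quickly reduced to the hypothesis, and preservation of minimality is the substantive content. For the non-trivial dominance direction, if $D \in \D(G)$ failed to dominate $u$ in $G'$, then $v \in D$ and $N_{G'}(u) \cap D = \emptyset$; the auxiliary $u'$ from the hypothesis lies in $N_{G'}(u) \setminus \{v\}$, has $N_G[u'] = N_{G'}[u'] \subseteq N_{G'}[u]$, and must be dominated in $G$ by some element of the empty set $D \cap N_{G'}[u]$, a contradiction.

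For preservation of minimality, I would fix $D$ and $x \in D$ with private neighbor $y$ in the starting graph, and seek a private neighbor of $x$ in the other graph. When $y \notin \{u, v\}$, the closed neighborhood of $y$ coincides in $G$ and $G'$, and $y$ serves in both. By symmetry I then assume $y = u$; a short check forces $v \in D \setminus \{x\}$. For the direction $\D(G') \to \D(G)$, either $x = u$ or $x$ is a $G'$-neighbor of $u$. If $x = u$, invoking the hypothesis on $u$ produces $u' \in N_{G'}(u) \setminus \{v\}$ with $N_{G'}[u'] \subseteq N_{G'}[u]$; since $u' \notin \{u,v\}$ we have $N_G[u'] = N_{G'}[u']$, and combined with $N_{G'}[u] \cap (D \setminus \{u\}) = \emptyset$ (from privacy of $u$), the vertex $u'$ turns out to be a private neighbor of $u$ in \emph{both} graphs. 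If $x \neq u$, I assume toward contradiction that $u$ is the unique $G'$-private neighbor of $x$ and split on whether $u' = x$ or $u' \neq x$: the first forces $x$ itself to be privately dominated by $D \setminus \{x\}$, and the second produces $u'$ as a second private neighbor of $x$, both contradicting uniqueness. For the reverse direction $\D(G) \to \D(G')$, $x$ is forced to be $v$ (since $y = u$ is only ejected from $x$'s $G'$-closed neighborhood when $x = v$); again using the hypothesis on $u$, the auxiliary $u'$ would have to be $G$-dominated by some element of $D \cap N_{G'}[u] = \emptyset$ if $u$ were the unique $G$-private neighbor of $v$, yielding another $G$-private neighbor of $v$ (which is automatically $G'$-private).

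The main obstacle I expect is the bookkeeping in the ``unique private neighbor'' subcase: one must simultaneously use the emptiness of $N_{G'}[u] \cap (D \setminus \{x\})$ coming from privacy, the containment $N_{G'}[u'] \subseteq N_{G'}[u]$ coming from the hypothesis, and the fact that $N_G$ and $N_{G'}$ agree outside $\{u, v\}$, in order to trap a putative dominator in the empty set. Once these three ingredients are correctly lined up, both inclusions follow from short combinatorial chases, and the symmetric hypothesis on $v$ takes care of the mirror case $y = v$.
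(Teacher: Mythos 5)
The paper itself gives no proof of this proposition---it is imported from \cite{mary2013enumeration,kante2014enumeration}---and the standard argument there is essentially your first paragraph alone. Your write-up is correct, but the second and third paragraphs are superfluous: the first paragraph already shows that an \emph{arbitrary} set $D$ dominates $G$ if and only if it dominates $G-uv$ (the nontrivial direction never uses minimality of $D$: if $u$ were undominated in $G-uv$, the witness $u'$, whose closed neighborhood is unchanged and contained in $N_{G-uv}[u]$, would have its only possible dominators trapped in $D\cap N_{G-uv}[u]=\emptyset$; symmetrically for $v$ via $v'$). Hence $G$ and $G-uv$ have exactly the same family of dominating sets, so their inclusion-wise minimal elements coincide and $\D(G)=\D(G-uv)$ follows at once---preservation of minimality is not ``the substantive content'' but a formal consequence, and no private-neighbor analysis is needed. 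That said, your private-neighbor chase does go through: the facts you leave implicit are exactly the ones you flag, namely that $u'\in N_{G-uv}(u)$ forces $u'\notin\{u,v\}$ (so $N_G[u']=N_{G-uv}[u']$), and that in the subcase $x\neq u$, $u'\neq x$ the vertex $u'$ is adjacent to $x$ in $G-uv$ because its only possible dominator lies in $N_{G-uv}[u']\subseteq N_{G-uv}[u]$, which meets $D$ only in $x$; with these in place each subcase produces the required private neighbor (and your ``reverse direction'' problematic case $x=v$, $y=u$ is in fact vacuous, since your own contradiction shows $u$ can never be a private neighbor of $v$ once the hypothesis holds and $D$ dominates $G$).
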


\begin{lemma}\label{lemma:redundant-bij}
  There is a bijection between $\ID(G,P_G)$ and $\ID(G_{re},P_{G_{re}})$.
\end{lemma}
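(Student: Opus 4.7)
The plan is to establish the bijection in two stages: first, show that the inter-star edges removed during the reduction are inessential for $\ID$; second, show that the collapse of each star's branches to a single representative induces the desired bijection.

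For the first stage, fix any edge $u_iu_j$ joining centers of two distinct stars. Taking a branch $v_i^1 \in S_i$ and a branch $v_j^1 \in S_j$ as witnesses, we have $N_{G - u_iu_j}[v_i^1] = \{v_i^1, u_i\} \subseteq N_{G - u_iu_j}[u_i]$ and the symmetric statement for $S_j$, so $u_iu_j$ is redundant in the sense of Proposition~\ref{prop:redundant}; hence $\D(G) = \D(G - u_iu_j)$. Moreover, $P_G$ remains a weak neighborhood inclusion poset on $G - u_iu_j$, since the only comparable pairs in $P_G$ live inside a single star and their neighborhood-inclusion relation is unaffected by removing an inter-star edge. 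Iterating over all such edges and invoking Equation~\eqref{eqn:main} yields $\ID(G, P_G) = \ID(G'', P_G)$, where $G''$ denotes $G$ stripped of all inter-star edges.

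For the second stage, I would perform a case analysis to determine the possible ``footprints'' of a minimal dominating-ideal $I \in \ID(G'', P_G)$ on each star $S_i$. If $S_i = \uparrow u_i$ is an up-star, then no branch lies in $I$: any $v_i^j \in I$ would force $u_i \in I$ by the ideal property, and $v_i^j$ would then be a maximal element of $I$ whose removal preserves both the ideal and domination properties (as $v_i^j$ is dominated by $u_i$ in $I \setminus \{v_i^j\}$), contradicting minimality. Since the degree-one branches still must be dominated within the star, $u_i \in I$, so $I \cap S_i = \{u_i\}$. If $S_i = \downarrow u_i$ is a down-star, then when $u_i \notin I$ each branch must itself lie in $I$ (its only non-self neighbor $u_i$ is absent), while when $u_i \in I$ the ideal property forces every branch into $I$. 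Thus $I \cap S_i$ is either $\{v_i^1, \dots, v_i^l\}$ or $S_i$ itself. Define $\phi(I)$ by preserving $I \cap A$ and, for each star $S_i$, replacing $I \cap S_i$ with $\{u_i\}$ (up-star), $\{v_i\}$ (down-star with $u_i \notin I$), or $\{u_i, v_i\}$ (down-star with $u_i \in I$).

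To finish, I would verify that $\phi$ is a well-defined bijection onto $\ID(G_{re}, P_{G_{re}})$. The adjacencies within $A$ and between $A$ and the centers $u_1,\dots,u_k$ coincide in $G''$ and $G_{re}$, and because branches have degree one in both graphs, the domination and private-neighbor conditions at any element of $A \cup \{u_1,\dots,u_k\}$ reduce to the same external conditions in either graph. The branches themselves automatically satisfy the private-neighbor condition (each branch present in $I$ is its own private neighbor), so minimality transfers through the collapse. The inverse of $\phi$ re-expands the representative $v_i$ back to the full block of branches of $S_i$ (either all branches, or none, depending on whether $v_i$ was in the image). The main obstacle is the footprint case analysis, and in particular the minimality argument excluding up-star branches from $I$; the remaining verifications are routine domination and privacy book-keeping across $G''$ and $G_{re}$.
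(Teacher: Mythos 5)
Your proof is correct and follows essentially the same route as the paper: the inter-star edges are discarded via Proposition~\ref{prop:redundant} with branch vertices as witnesses, and the collapse of each star's branches is justified by the fact that a minimal solution uses one branch exactly when it uses all of them. Your footprint case analysis on ideals is just a more explicit rendering of the paper's false-twin observation, so no substantive difference remains.
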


\begin{proof}
  Let $S$ be a star of Proposition~\ref{prop:star-in-poset} of center $u$ and branches $v^1,\dots,v^l$.
  Then, observe that $v^1,\dots,v^l$ are false twins in $G$, i.e., $N(v^i)=N(v^j)=u$ for all $i,j\in [l]$.
  It is easy to see that a minimal dominating set contains $v^i$ for one such $i$ if and only if contains the whole set $\{v^1,\dots,v^l\}$ as a subset.
  Hence, the contraction of all branches $\{v^1,\dots,v^l\}$ of $S$ into a representative vertex $v$ in both $G$ and $P_G$ has no impact on the complexity of enumerating minimal dominating sets: one can replace $v$ by $\{v^1,\dots,v^l\}$ for every $D\in \D(G)$ such that $\downarrow D\in \ID(G,P_G)$ and $v\in D$ to obtain solutions of the graph before contraction.
  As for the deleted edges $u_iu_j$, $i,j\in [k]$, $i\neq j$, they are all redundant as $N_{G-u_iu_j}[v_p]\subseteq N_{G-u_iu_j}[u_p]$ for all $i,j,p\in [k]$, $i\neq j$.
  By Proposition~\ref{prop:redundant}, they can be removed from $G$ with no incidence on domination.
\end{proof}

\begin{proposition}\label{prop:min-Bu}
  For every minimal dominating set $D$ such that $\downarrow D \in \ID(G_{re},P_{G_{re}})$, $\Min(P_{G_{re}})\cap B_u\subseteq D$.
\end{proposition}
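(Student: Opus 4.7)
The plan is to prove the statement by contradiction: assuming that some $u_i \in \Min(P_{G_{re}}) \cap B_u$ fails to belong to $D$, I would construct an ideal strictly contained in $\downarrow D$ that still dominates $G_{re}$, contradicting the hypothesis $\downarrow D \in \ID(G_{re}, P_{G_{re}})$. The argument leans on the rigid structure of the reduced instance supplied by Proposition~\ref{prop:star-in-poset} and the construction of $G_{re}$ and $P_{G_{re}}$: recall that $P_{G_{re}}$ is partitioned into isolated elements of $A$ and disjoint chains $u_j v_j$ of length two, while each $v_j$ has degree one in $G_{re}$ with $N_{G_{re}}[v_j] = \{u_j, v_j\}$.

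Fix $u_i \in \Min(P_{G_{re}}) \cap B_u$. Since $u_i v_i$ is a chain and $u_i$ is minimal, necessarily $u_i < v_i$ in $P_{G_{re}}$. Since $v_i$ has only neighbor $u_i$ in $G_{re}$, the dominating set $D$ must intersect $\{u_i, v_i\}$, and if $u_i \notin D$ then $v_i \in D$. I would then set $D' = (D \setminus \{v_i\}) \cup \{u_i\}$; because $u_i$ dominates both endpoints of the chain and $v_i$ dominates nothing else in $G_{re}$, the set $D'$ still dominates $G_{re}$. Using $\downarrow u_i = \{u_i\}$, $\downarrow v_i = \{u_i, v_i\}$, and the crucial fact that $v_i$ is maximal in $P_{G_{re}}$ (so $v_i \leq x$ forces $x = v_i$ for any $x \in P_{G_{re}}$), a short computation gives
\[
  \downarrow D' \;=\; \{u_i\} \cup \!\!\bigcup_{x \in D \setminus \{v_i\}}\!\! \downarrow x \;=\; \downarrow D \setminus \{v_i\} \;\subsetneq\; \downarrow D.
\]
Since $D' \subseteq \downarrow D'$ and $D'$ dominates $G_{re}$, the ideal $\downarrow D'$ dominates $G_{re}$; any minimal dominating-ideal contained in $\downarrow D'$ is then an element of $\ID(G_{re}, P_{G_{re}})$ strictly contained in $\downarrow D$, contradicting $\downarrow D \in \ID(G_{re}, P_{G_{re}})$.

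The main delicate point is verifying the displayed equality, i.e., that $v_i$ lies in no $\downarrow x$ for $x \in D \setminus \{v_i\}$, so that removing it is a genuine strict decrease of $\downarrow D$. This is exactly where the height-two structure from Proposition~\ref{prop:star-in-poset}, together with the fact that $u_i < v_i$ forces $v_i$ to be the maximum of its chain component, is used; outside of that chain $v_i$ is incomparable with everything, so no other element of $D$ can have $v_i$ below it. Once this bookkeeping is in place, the rest of the argument is formal.
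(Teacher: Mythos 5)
Your proof is correct and follows essentially the same route as the paper: both arguments exploit that $v_i$ has degree one with $N_{G_{re}}[v_i]=\{u_i,v_i\}$ and that $u_i<v_i$, so a dominating-ideal containing $v_i$ can be strictly shrunk (it already contains $u_i$, which dominates $N[v_i]$), forcing $u_i\in D$. Your version merely spells out the exchange $D'=(D\setminus\{v_i\})\cup\{u_i\}$ and the equality $\downarrow D'=\downarrow D\setminus\{v_i\}$ explicitly, which the paper leaves implicit.
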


\begin{proof}
  Let $u\in B_u\cap \Min(P_G)$ and $v\in B_v$ be the unique vertex such that $u<v$.
  Since $v$ is of degree one in $G$, it must be dominated by either itself, or $u$.
  Since $u<v$, a dominating-ideal that contains $v$ is not minimal. 
  Hence $\Min(P_{G_{re}})\cap B_u\subseteq D$ for all minimal dominating set $D$ such that $\downarrow D \in \ID(G_{re},P_{G_{re}})$.
\end{proof}

Let $G$ be a graph and $W,D$ be two subsets of vertices of $G$.
Recall that $\D_G(W)$ denotes the set of minimal dominating sets of subset $W$ in $G$; see~Section~\ref{sec:preliminaries}.
We now rely on an implicit result from \cite{bonamy2019triangle}, made explicit in~\cite{bonamy2019kt}.

\begin{theorem}[\cite{bonamy2019triangle,bonamy2019kt}]\label{thm:triangle-free}
  There is an algorithm that, given a graph $G$ and a set $W\subseteq V(G)$ such that $G[W]$ is triangle-free, enumerates $\D_G(W)$ in total time $\poly(|G|)\cdot|\D_G(W)|^2$ and polynomial space.
\end{theorem}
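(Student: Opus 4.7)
The plan is to extend the algorithm of \cite{bonamy2019triangle}, originally designed to enumerate $\D(G)$ when $G$ is triangle-free, to the more general task of enumerating $\D_G(W)$ when only the induced subgraph $G[W]$ is triangle-free. That algorithm is based on a supergraph-traversal scheme: one defines a polynomial-size family of ``flip'' operations that transforms any $D\in \D(G)$ into polynomially many neighboring minimal dominating sets by substituting a vertex $v\in D$ with a suitable replacement in the neighborhood of some $u\in \priv(D,v)$, followed by re-minimalization. Starting from a canonical solution $D^\star$ (e.g., the lexicographically smallest element of $\D(G)$, computed greedily), a DFS on this supergraph with duplicate detection enumerates every element of $\D(G)$ in total time $\poly(|G|)\cdot|\D(G)|^2$ and polynomial space.

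The key observation for the extension is that the minimality of $D\in \D_G(W)$ forces $\priv(D,v)\cap W\neq\emptyset$ for every $v\in D$. Hence every local argument in \cite{bonamy2019triangle} that invokes triangle-freeness does so on a triple $\{v,u,w\}$ with $u\in \priv(D,v)\subseteq W$ and $w\in N_G(u)$. The forbidden triangle, when its three vertices all lie in $W$, is precisely a triangle of $G[W]$, which is what triangle-freeness of $G[W]$ excludes. The arguments of \cite{bonamy2019triangle} therefore transfer essentially verbatim, once all reasoning about private neighbors is explicitly restricted to $W$. The flip itself must be adapted accordingly: after a substitution, re-minimalization becomes \emph{$W$-minimalization}, namely iterative deletion (in a fixed vertex ordering) of any $v$ such that $\priv(D,v)\cap W=\emptyset$.

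The main obstacle is precisely this $W$-minimalization step. In the setting of \cite{bonamy2019triangle} one has $W=V(G)$, so every vertex is itself a potential private neighbor and the canonical parent of each solution is naturally defined; here, a vertex may retain private neighbors in $V(G)\setminus W$ while losing all of them in $W$, so the deletion rule genuinely differs from the standard one and the proximity argument driving the DFS must be re-examined. I would resolve this by fixing a vertex ordering of $V(G)$ once and for all, and by always deleting the smallest violating vertex first until no more exist; this yields a well-defined and polynomial-time computable canonical completion map into $\D_G(W)$, which is monotone in the sense required to preserve the ``canonical parent'' structure. Once this is in place, the flip machinery of \cite{bonamy2019triangle} applies syntactically to $\D_G(W)$, giving the claimed $\poly(|G|)\cdot|\D_G(W)|^2$ running time and polynomial space.
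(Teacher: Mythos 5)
First, a point of comparison: the paper does not prove this statement at all. Theorem~\ref{thm:triangle-free} is imported as a black box from the literature (implicit in \cite{bonamy2019triangle}, made explicit in \cite{bonamy2019kt}), so there is no internal proof to match; what you are attempting is a reconstruction of the cited result, and it has to be judged on its own.

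As such a reconstruction, it has a genuine gap at its central claim, namely that every use of triangle-freeness in \cite{bonamy2019triangle} concerns a triple lying entirely inside $W$, so that the arguments ``transfer essentially verbatim.'' This is asserted, not verified, and it is exactly the nontrivial content of the generalization. Minimality of $D\in\D_G(W)$ only gives $\priv(D,v)\cap W\neq\emptyset$, not $\priv(D,v)\subseteq W$ as you write; more importantly, a private neighbor $u\in W$ may have part or all of its neighborhood outside $W$, so any step of the original argument that exploits, say, that $N(u)$ is an independent set or that two dominators of $u$ are non-adjacent is no longer backed by triangle-freeness of $G[W]$, since the relevant triangle need not lie inside $W$. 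Checking whether the argument can be localized to $W$ (or must be modified, e.g.\ in the re-minimalization and parent definition you sketch) is precisely the work done in \cite{bonamy2019kt}; ``verbatim transfer'' plus an ad hoc $W$-minimalization rule whose canonical-parent properties are only claimed does not close this. There is also an internal inconsistency in your description of the base algorithm: a DFS over a solution supergraph with duplicate detection requires storing visited solutions, hence exponential space, whereas the theorem claims polynomial space; in the flipping-based scheme actually used, the quadratic factor $|\D_G(W)|^2$ arises from charging candidate children against the output, not from polynomially many neighbors per node, and polynomial space comes from a parent relation that avoids storing solutions. These points need to be addressed, or the statement should simply be cited as the paper does.
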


Let us define the set $B_w=\Min(B)=\{w_1,\dots,w_k\}$. 
Note that $w_i=\Min_\leq\{u_i,v_i\}$ for all $i\in [k]$.
We now consider the set 
\[
  A'\eqdef{}A\setminus \bigcup_{i=1}^k N[w_i].
\]
Clearly, $G_{re}[A']$ is triangle-free.
Hence, $\D_{G_{re}}(A')$ can be enumerated in output-polynomial time $\poly(|{G_{re}}|)\cdot|\D_{G_{re}}(A')|^2$ using the algorithm of Theorem~\ref{thm:triangle-free}.
We now show how to compute $\ID({G_{re}},P_{G_{re}})$ given $\D_{G_{re}}(A')$.

\begin{figure}
  \center
  \includegraphics[scale=\figurescale,page=2]{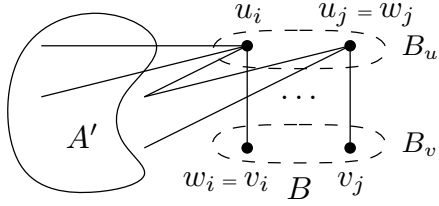}
  \caption{The situation of Lemma~\ref{lemma:triangle-free}.}
  \label{fig:decomposition2}
\end{figure}

\begin{lemma}\label{lemma:triangle-free}
  Le $D$ be a minimal dominating set of $G$.
  Then $\downarrow D\in \ID({G_{re}},P_{G_{re}})$ if and only if $D= D^*\cup \{w_i \mid v_i\not\in N[D^*]\}$, $D^*\in \D_{G_{re}}(A')$.
\end{lemma}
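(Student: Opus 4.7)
The plan is to prove both implications by first classifying, for each chain $u_iv_i$ of $P_{G_{re}}$, how $D$ intersects $\{u_i,v_i\}$. By Proposition~\ref{prop:domantichain}, $D$ is an antichain; combined with the fact that $v_i$ has degree one in $G_{re}$, exactly one of $u_i, v_i$ lies in $D$. Three cases arise: Case~A where $w_i = u_i$ (so $u_i \in D$ by Proposition~\ref{prop:min-Bu}), Case~B1 where $w_i = v_i$ and $u_i \in D$, and Case~B2 where $w_i = v_i$ and $v_i \in D$.

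For the forward direction I will set $D^* \eqdef (D \cap A) \cup \{u_i : u_i \in D \text{ and } w_i = v_i\}$, which discards from $D$ exactly the chain elements of Cases~A and~B2. A direct case analysis yields $v_i \in N[D^*]$ if and only if Case~B1 holds, whence $D \setminus D^* = \{w_i : v_i \notin N[D^*]\}$, giving the claimed formula. It then remains to verify $D^* \in \D_{G_{re}}(A')$: domination of $A'$ is automatic because the discarded $w_i$'s of Case~A cover exactly $A \setminus A'$; minimality of $D^*$ requires producing a private neighbor in $A'$ for each $x \in D^*$.

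For the backward direction I will first verify that the constructed $D$ dominates $G_{re}$ by inspecting each of $A'$, $A \setminus A'$, $B_u$, $B_v$, and that $D$ is inclusion-minimal (private neighbors inherited from $D^*$ for $A$-vertices and Case~B1 centers, and the leaf $v_i$ itself for Cases~A and~B2). To show $\downarrow D \in \ID$, I will assume some $D'' \in \D(G_{re})$ satisfies $\downarrow D'' \subsetneq \downarrow D$ and derive a contradiction by picking $y \in \downarrow D \setminus \downarrow D''$, splitting on whether $y \in D^*$. If $y \notin D^*$, then $y$ lies in some chain (as a Case~A $u_i$, a Case~B2 $v_i$, or a $v_i$ below a Case~B1 $u_i$), and the height-two structure of $P_{G_{re}}$ forces $v_i$ to be undominated in $\downarrow D''$. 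If $y \in D^*$, any surrogate dominator in $D''$ of the $A'$-private neighbor $p$ of $y$ must (by the chain structure) lie in $D \setminus \{y\}$, and contradicts either the $D^*$-privateness of $p$ or the membership $p \in A'$ via Proposition~\ref{prop:min-Bu}.

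The main obstacle will be establishing the minimality of $D^*$ over $A'$ in the forward direction. The naive private neighbor $p$ of some $x \in D^*$ with respect to $D$ may lie outside $A'$: in $B_u$ when $x \in A$, or equal to $v_i$ when $x = u_i$ in Case~B1. I plan to bypass this by exploiting $\downarrow D \in \ID$: since $x$ is maximal in $P_{G_{re}}$, the set $\downarrow D \setminus \{x\}$ is a strictly smaller ideal and must therefore fail to dominate $G_{re}$, so some vertex in $N[x]$ is uniquely covered by $x$ in $\downarrow D$. Using the chain structure of Proposition~\ref{prop:star-in-poset}, I will show that this critical vertex cannot lie in $B_u$ (any $u_j$ is dominated via its chain neighbor in $\downarrow D \setminus \{x\}$) and therefore lies in $A$, after which the definition $A' = A \setminus \bigcup_j N[w_j]$ together with Proposition~\ref{prop:min-Bu} forces it into $A'$.
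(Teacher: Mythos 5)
Your proposal is correct and takes essentially the same route as the paper: you use the very same set $D^*=D\setminus\{w_i \mid w_i\in D\}$, the same identity $D\setminus D^*=\{w_i \mid v_i\notin N[D^*]\}$, the same appeals to Proposition~\ref{prop:min-Bu} and to the definition of $A'$, and for the delicate case $x=u_i$ with $v_i<u_i$ you exploit minimality of the dominating ideal exactly as the paper does, only phrased directly via $\priv(\downarrow D,x)$ (i.e., $\downarrow D\setminus\{x\}$ fails to dominate yet still contains $v_i$) instead of the paper's ``replace $u_i$ by $v_i$'' swap. Your backward direction is likewise the paper's verification recast as a contradiction with a strictly smaller dominating ideal, so the two proofs coincide in substance.
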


\begin{proof}
  The situation of this lemma is depicted in Figure~\ref{fig:decomposition2}.
  We show the first implication.
  Let $D\in \D(G)$ such that $\downarrow D\in \ID({G_{re}},P_{G_{re}})$, and let $D^*=D\setminus \{w_i \mid w_i \in D\}$.
  Clearly, $D^*$ dominates $A'$.
  Let $t\in D^*$.
  We show that it has a private neighbor in $A'$.
  Let $a$ be a private neighbor of $t$ (w.r.t.~$D$) such that $a\not\in A'$.
  If no such $a$ exists, then we proved our claim, as in that case $t$ must have a private neighbor in $A'$.
  Else, $a$ belongs to $N[w_i]$ for some $i\in [k]$.
  If~$w_i=u_i$ then by Proposition~\ref{prop:min-Bu} $w_i\in D$ which contradicts the fact that $a$ is a private neighbor of $t$.
  If~$w_i=v_i$, then $a\in \{u_i,v_i\}$. 
  Since either $u_i$ or $v_i$ belongs to $D$ (as $v_i$ is of degree one), it must be that either $t=u_i$ or $t=v_i$. 
  As $t\neq w_i=v_i$, we know that $t=u_i$.
  In that case, $t$ has another private neighbor $a'\neq a$ that is non-adjacent to $v_i$ (or else $\downarrow D$ is not a minimal dominating-ideal as $t=u_i$ can be replaced by $v_i$, $a\in N[v_i]$, and $v_i<u_i$).
  At last, if $a'$ belongs to $w_{j}$ for some $j\in [k]$, then $w_{j}=u_{j}$ (as $N[v_i]=\{u_i,v_i\}$ and $B$ is an induced matching, hence $a\neq u_j$) which by Proposition~\ref{prop:min-Bu} is absurd, as $w_j\in D$.
  Hence $a'\in A'$, which proves our claim.
  Hence $D^*$ minimally dominates~$A'$, i.e., $D^*\in \D_{G_{re}}(A')$.
  Now, note that $w_i\in D$ if and only if $v_i \not\in N[D^*]$.
  Indeed, if $v_i \not\in N[D^*]$ then $w_i \in D$ (as otherwise $w_i \not\in D$, by Proposition~\ref{prop:min-Bu} $w_i=v_i$, hence $u_i \in D$, $u_i\in D^*$, and $v_i\in N[D^*]$ which is absurd).
  If $v_i \in N[D^*]$, then $u_i\in D^*$, $w_i=v_i$, and $w_i\not\in D$ or else $\{u_i,v_i\}\subseteq D$ which is absurd since $D$ is an antichain.
  Hence $D= D^*\cup \{w_i \mid v_i\not\in N[D^*]\}$ which concludes the first implication.

  We show the other implication.
  Let $D^*\in \D_{G_{re}}(A')$ and $D=D^*\cup \{w_i \mid v_i\not\in N[D^*]\}$.
  Clearly $D$ dominates ${G_{re}}$ as for all $i\in[k]$, either $v_i\in N[D^*]$ and therefore $u_i\in D^*$ (as $v_i$ is disconnected from $A'$) and $N[w_i]$ is dominated, or $v_i\not\in N[D^*]$ and $w_i$ dominates $N[w_i]$.
  Note that if $t\in D^*$ then it has private neighbors in $A'$ that are not adjacent to any $w_i$ (by construction), hence such that no ideal $I\subsetneq \downarrow (D\setminus \{t\})$ can dominate.
  If $t\in D\setminus D^*$ then $t=w_i$ for some $i\in [k]$, it has $v_i$ for private neighbor, and it is minimal in $P_{G_{re}}$.
  Hence $\downarrow D$ is minimal dominating-ideal of $G$.
\end{proof}

We conclude to the existence of an output-polynomial algorithm to enumerate the set $\ID(G,P_G)$ whenever $G$ is triangle-free and $P_G$ is a weak neighborhood inclusion poset.
The algorithm first computes $G_{re}$ and $P_{G_{re}}$ in polynomial time in the sizes of $G$ and $P_G$, and then enumerates $\ID(G,P_G)$ using Lemmas~\ref{lemma:redundant-bij} and~\ref{lemma:triangle-free}.

\begin{theorem}\label{thm:bipartite}
  There is an algorithm that, given a triangle-free graph~$G$ and a weak neighborhood inclusion poset $P_G$, enumerates $\ID(G,P_G)$ in output-polynomial time.
\end{theorem}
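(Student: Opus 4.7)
The plan is to assemble the reductions already prepared in this subsection into a single pipeline. First, I would construct the reduced graph $G_{re}$ and reduced poset $P_{G_{re}}$ in polynomial time: this uses the decomposition of $P_G$ given by Proposition~\ref{prop:star-in-poset} into an antichain $A$ of isolated elements and a family of $k$ disjoint stars $S_1, \dots, S_k$, and consists in contracting the branches of each star $S_i$ to a single representative $v_i$ and deleting all edges $u_iu_j$ between distinct star centers. Lemma~\ref{lemma:redundant-bij} then provides an efficiently computable bijection between $\ID(G,P_G)$ and $\ID(G_{re},P_{G_{re}})$, so it suffices to enumerate the latter in time polynomial in $|G|$ and $|\ID(G,P_G)|$.

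Next, I would compute the set $A' = A \setminus \bigcup_{i=1}^k N[w_i]$ where $w_i = \Min_\leq\{u_i, v_i\}$. The subgraph $G_{re}[A']$ is triangle-free since it is an induced subgraph of $G$, and hence I can invoke the algorithm of Theorem~\ref{thm:triangle-free} to enumerate $\D_{G_{re}}(A')$ in total time $\poly(|G_{re}|) \cdot |\D_{G_{re}}(A')|^2$ and polynomial space. For each $D^*$ produced by this subroutine, the pipeline outputs the ideal $\downarrow D$ with $D = D^* \cup \{w_i \mid v_i \not\in N[D^*]\}$, and then lifts it back to $G$ by replacing each contracted vertex $v_i$ by its full set of false-twin branches $\{v_i^1, \dots, v_i^l\}$.

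Correctness and the output-polynomial bound then follow directly. Lemma~\ref{lemma:triangle-free} guarantees that the map $D^* \mapsto \downarrow D$ is a bijection between $\D_{G_{re}}(A')$ and $\ID(G_{re},P_{G_{re}})$, so every minimal dominating-ideal of $G_{re}$ is output exactly once; composing with the bijection of Lemma~\ref{lemma:redundant-bij} recovers $\ID(G,P_G)$ without duplication. Since $|\D_{G_{re}}(A')| = |\ID(G_{re},P_{G_{re}})| = |\ID(G,P_G)|$, the overall runtime is $\poly(|G|) \cdot |\ID(G,P_G)|^2$, which is output-polynomial as required.

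The main obstacle is not in the enumeration itself but in the structural preparation carried out above: extracting the star decomposition of Proposition~\ref{prop:star-in-poset}, removing redundant edges via Proposition~\ref{prop:redundant}, and isolating the triangle-free core $A'$ on which Theorem~\ref{thm:triangle-free} can be deployed without disrupting the bijection. Once these pieces are in place, the argument reduces to checking that the lifting step preserves both ``being an ideal of $P_G$'' and minimal domination, which is immediate from the false-twin argument behind Lemma~\ref{lemma:redundant-bij}: a minimal dominating-ideal must either contain all branches of a star $S_i$ or none of them, so expanding $v_i$ to $\{v_i^1, \dots, v_i^l\}$ yields a well-defined inverse to the contraction.
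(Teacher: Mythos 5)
Your proposal is correct and follows essentially the same route as the paper: build $G_{re}$ and $P_{G_{re}}$, use the bijection of Lemma~\ref{lemma:redundant-bij}, enumerate $\D_{G_{re}}(A')$ via Theorem~\ref{thm:triangle-free}, and convert each $D^*$ to $\downarrow D$ with $D=D^*\cup\{w_i \mid v_i\not\in N[D^*]\}$ as justified by Lemma~\ref{lemma:triangle-free}. The only point you state slightly more strongly than the paper is that this last map is a bijection, which does hold since no $w_i$ can belong to a set in $\D_{G_{re}}(A')$ (its closed neighborhood misses $A'$), so $D^*$ is recoverable from $D$.
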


We note that as $G_{re}[A]$ can yield any triangle-free graph in our construction, improving the algorithm of Theorem~\ref{thm:bipartite} to run with polynomial delay constitutes a challenging open question~\cite{bonamy2019triangle}.

\section{Conclusion}\label{sec:conclusion}

In this paper, we generalized the two problems of enumerating the minimal transversals of a hypergraph, and the minimal dominating sets of a graph, to the enumeration of the minimal ideals of a poset with the desired property, i.e., transversality and domination.
We showed that the obtained problems are equivalent to the dualization in distributive lattices, even when considering various combined restrictions on graph classes and poset types, including bipartite, split, and co-bipartite graphs, and variants of neighborhood inclusion posets; see Theorems~\ref{thm:maintrans} and~\ref{thm:maindom}.
This study allowed us to consider the complexity of the problem under new parameters.
For combined restrictions that are not considered in Theorem \ref{thm:maindom}, we showed that the problem is tractable relying on existing algorithms from the literature; see Theorems~\ref{thm:split} and~\ref{thm:bipartite}.
A summary of the obtained complexities is given in Figure~\ref{fig:sum}.

\begin{figure}
  \small
  \center
  \begin{tabular}{ | C{3cm} | C{2.3cm} | C{2.3cm} | C{2.3cm} |N} 

    \hline 
    Graph classes & N.I. posets & Weak N.I. posets & Arbitrary posets &\\
    \hline
    \hline 
    Bipartite & {\sf OutputP} & {\sf OutputP} &  \textsc{D}-hard \\
    \hline 
    Split & {\sf PolyD} &  \textsc{D}-hard &  \textsc{D}-hard \\
    \hline 
    Co-bipartite &  \textsc{D}-hard &  \textsc{D}-hard &  \textsc{D}-hard \\
    \hline 
  \end{tabular} 
  \caption{Summary of the complexity results obtained in Theorems~\ref{thm:maindom},~\ref{thm:split} and~\ref{thm:bipartite} under combined restrictions on graph classes and poset types. 
  {\sf OutputP} stands for output-polynomial, and {\sf PolyD} for polynomial delay.
  N.I.~stands for neighborhood inclusion, and \textsc{D}-hard for \DualEnum{}-hard.}\label{fig:sum}
\end{figure}

We leave open the complexity status of distributive lattice dualization in general.
We point that the results of Theorems~\ref{thm:split} and \ref{thm:triangle-free} characterize couples of antichains (coded by the graph) and distributive lattices (coded by the poset) for which the dualization is tractable.
For future work, we would be interested in characterizations that only depend on the poset, in order to obtain classes of lattices for which the dualization is tractable, as in \cite{defrain2019dualization,elbassioni2009algorithms}, using graph structures presented in this paper.

\bibliographystyle{alpha}
\bibliography{main}

\end{document}